\newtheorem{definition}{Definition}
\newtheorem{lemma}{Lemma}
\newtheorem{theorem}{Theorem}
\def\bs{\ensuremath\boldsymbol}
\begin{document}
\title{Dynamic Adaptive Streaming using Index-Based Learning Algorithms 
}

\author{Rahul Singh and P. R. Kumar 
\thanks{R. Singh is at 32-D716, LIDS, MIT, Cambridge, MA 02139; P. R. Kumar is at 
Dept. of ECE, Texas A\&M Univ., 3259 TAMU, College Station, TX 77843-3259.
{\tt\small rsingh12@mit.edu, prk@tamu.edu.}
 } %
 }

\maketitle
\IEEEpeerreviewmaketitle

\begin{abstract}
We provide a unified framework using which we design scalable dynamic adaptive video streaming algorithms based on index based policies (dubbed \textbf{DAS-IP} Fig.~\ref{figsoln}) to maximize the Quality of Experience (QoE) provided to clients using video streaming services. Due to the distributed nature of our algorithm DAS-IP, it can be easily implemented in lieu of popular existing Dynamic Adaptive Streaming over HTTP (DASH) algorithm which is used by various Cloud based video streaming services, Content Delivery Networks (CDNs), Cache networks, wireless networks, vehicular networks etc.

We begin by considering the simplest set-up of a one-hop wireless network in which an Access Point (AP) transmits video packets to multiple clients over a shared unreliable channel. The video file meant for each client has been fragmented into several packets, and the server maintains multiple copies (each of different quality) of the same video file.
Clients maintain individual packet buffers in order to mitigate the effect of uncertainty on video iterruption. Streaming experience, or the Quality of Experience (QoE) of a client depends on several factors: i) starvation/outage probability, i.e., average time duration for which the client does not play video because the buffer is empty, ii) average video quality, iii) average number of starvation periods, iv) temporal variations in video quality etc.

We pose the problem of making dynamic streaming decisions in order to maximize the total QoE as a Constrained Markov Decision Process (CMDP). A consideration of the associated dual MDP suggests us that the problem is vastly simplified if the AP is allowed to charge a price per unit bandwidth usage from the clients. More concretely, a ``client-by-client" QoE optimization leads to the networkwide QoE maximization, and thus provides us a decentralized streaming algorithm.

This enables the clients to themselves decide the optimal streaming choices in each time-slot, and yields us a much desired client-level adaptation algorithm. The optimal policy has an appealing simple threshold structure, in which the decision to choose the video-quality and power-level of transmission depends solely on the buffer-level. In case the clients are unaware of their (possibly random and time-varying) system parmeters, we develop algorithms that learn the indices while utilizing the strucure of the optimal decentralized policy. The decentralized nature of optimal policy implies that the DAS-IP has a much ``smaller" policy space to explore from, and hence converges fast.
\begin{figure}[!t]
	\centering
	\includegraphics[width=0.5\textwidth]{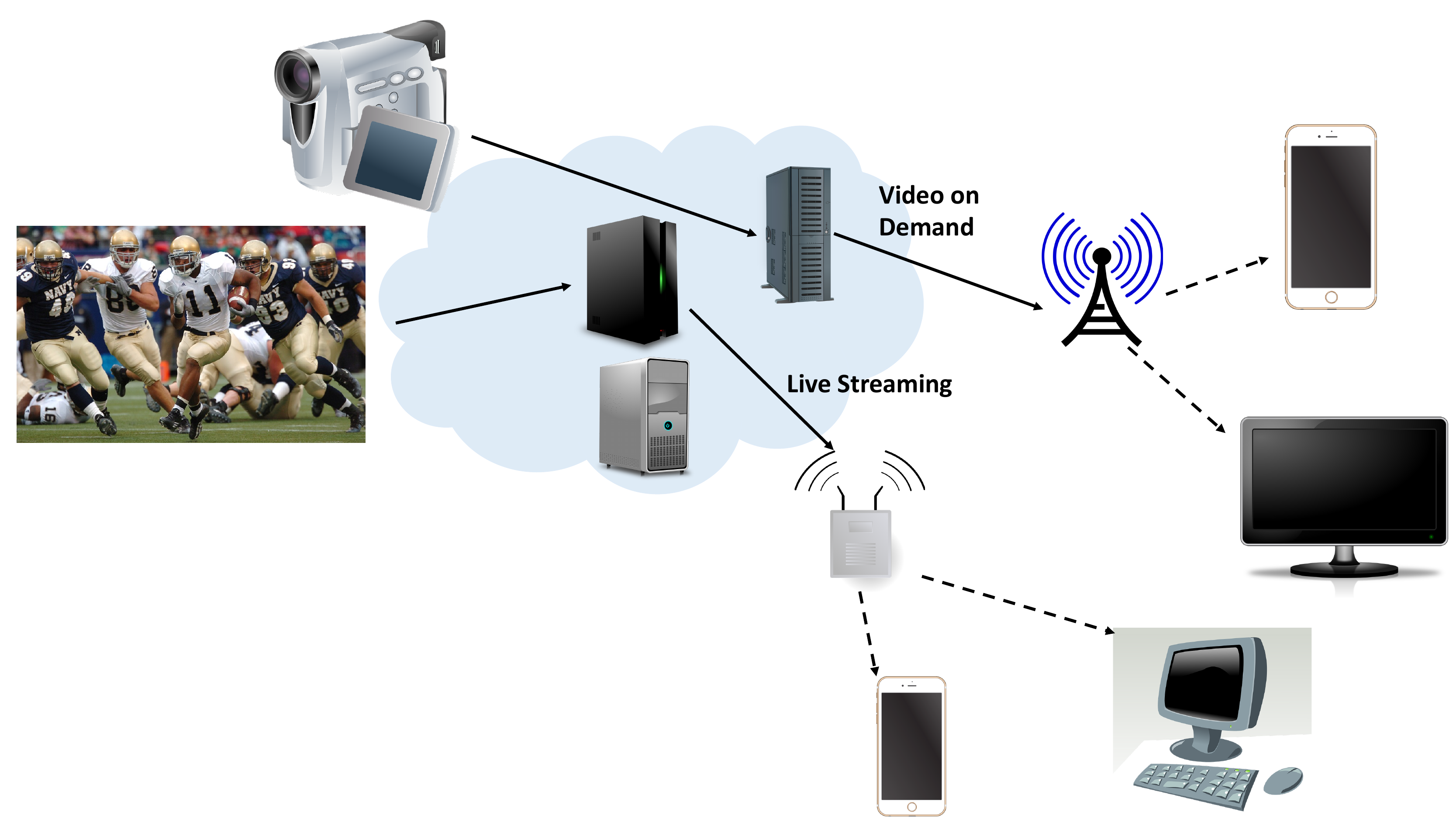}
	\caption{A Cloud live streaming a game, and hosting a Video on Demand (VoD) service. A continuous adaptation of bit rate of \emph{each} user is required in order to ensure a high level of Quality of Experience to end users.}
	\label{figcloud}
\end{figure}
\begin{figure}[!t]
	\centering
	\includegraphics[width=0.5\textwidth]{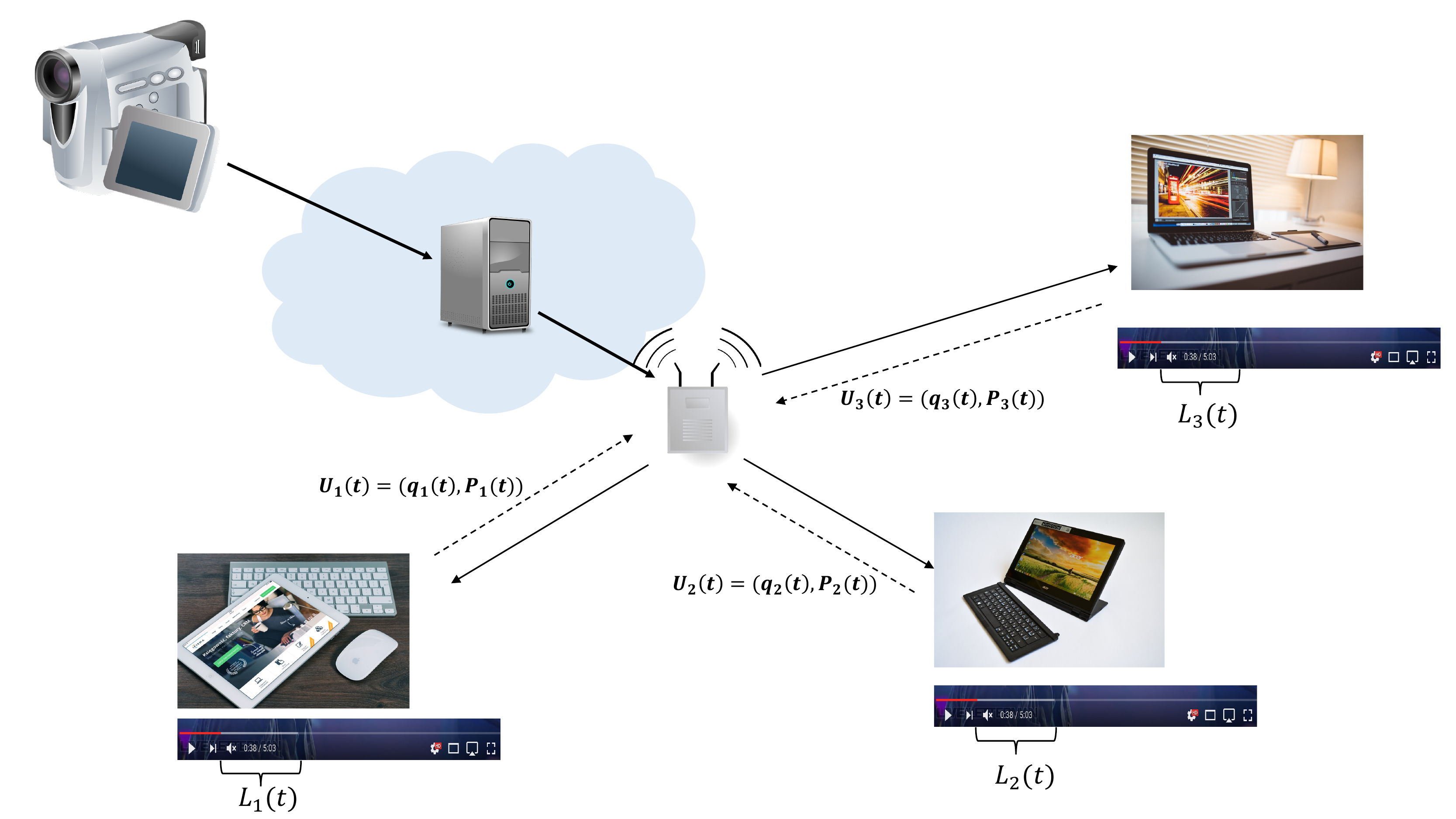}
	\caption{\textbf{DAS-Index Policy} (Dynamic Adaptive Streaming using Index Policy): Under the proposed solution, clients observe their instantaneous video quality, playback buffer level $L(t)$, wireless channel conditions and request the ``optimal" packet delivery rate $U_n(t)$, which is characterized by the transmission power level $P_n(t)$, and video quality $q_n(t)$.}
	\label{figsoln}
\end{figure}    
\end{abstract}
\section{Introduction}
Mobile video traffic accounted for 55 $\%$ of total mobile data traffic in 2015, and the share is expected to increase to 75 $\%$ by 2020. Unlike traditional QoS metrics such as throughput or delay, the user experience for video streaming applications depends on several complex factors, and is in itself an active area of research~\cite{Balachandran:2013:DPM:2534169.2486025,Dobrian:2011:UIV:2018436.2018478,Shafiq:2014:UIN:2637364.2591975}. In order to meet the stringent Quality of Experience (QoE) requirements imposed by video streaming applications, service providers have switched to advanced platforms such as Cloud based services~\cite{5754008}, Content Delivery Networks (CDNs)~\cite{Pallis:2006:IPC:1107458.1107462} etc. Moreover, they also use adaptive bitrate streaming algorithms such as DASH~\cite{dash}, HTTP Live Streaming (HLS) in order to continually monitor and improve the streaming experience.
      
Take the example of popular cloud services such as Microsoft Azure, IBM Cloud, Google Cloud, Amazon CloudFront, Apple's iCloud that provide live-streaming, on-demand video, online gaming services etc (Fig.~\ref{figcloud}. A party subscribing to live video streaming service with such a cloud will generate video file and upload it to the cloud in real-time. The cloud transcodes this data into multiple bit-rates, and the audience of this particular stream are served the video file using DASH. DASH enables a viewer to switch to low quality video in case his connection bandwidth is low, thus avoiding video interruptions. Since a major chunk of video data is demanded by mobile devices (that typically have bandwidth fluctuations), this enables the streaming service to reach a wider range of audiences. Since the CPU and storage-intensive video tasks get shifted to the cloud, this also enables computing-limited devices such as smartphones, access to high-quality, high-definition video in a wide range of locations. Figure~\ref{figcloud} depicts such a cloud service.

However, the state-of-the art adaptive streaming algorithms are unable to provide a satisfactory Quality of Experience (QoE) video streaming. As an example, the popular DASH algorithm is either too slow to respond to changes in congestion levels, or it is overly sensitive to short-term network bandwidth variations~\cite{a5}. Similarly, for clients served over wireless networks~\cite{a8}, rate adaptation needs to take complex factors such as channel fading into account. Experimental studies into studying the rate adaptation techniques employed by popular DASH clients such as Microsoft Smooth Streaming~\cite{microsoft}, Adobe OSMF~\cite{adobe}, Netflix have demonstrated that these algorithms perform poorly. 

Thus, the focus of this paper is to develop adaptive streaming algorithms that optimize the user experience by taking various factors into account while making streaming decisions. As will be shown later, our algorithms are decentralized, easily implementable and can be computed in a distributed fashion. This enables them to be embedded into existing techniques such as DASH. 

As a side-product of our analysis, we also touch upon a somewhat related problem of pricing the service resources so as to maximize the operator's revenue.  
\subsection{Past Works}
Previous works on video streaming have analyzed various relevant trade-offs. Trade-off between outage probability and number of initially buffered packets (initial delay time) is analyzed in~\cite{ParandehGheibi,gliang,egger,a4}, while~\cite{yim} studies the effect of variations in the temporal quality of videos on the global video quality.~\cite{Yuedong} studies the impact of flow level dynamics (flows entering and leaving the system) on the streaming Quality of Experience (QoE).~\cite{a5} considers the problem of controlling the rate at which a single client requests data from the server in order that the requested rate closely floows the TCP throughput available to it. However, the model assumes that only a single client is present in the network, ignores inherent system randomess and proposes a heuristic scheme.~\cite{6913491} provides an extensive survey on the QoE related works from human computer interaction and networking domains.

References \cite{rahul,rahul1,Rahul2015,guosingh,rs,rs1} develop a framework to design policies which provide services to clients in a regular fashion, though not in the context of video streaming QoE.
\subsection{Challenges}
Several metrics such as starvation probability (average time spent without video streaming), start-up delay, time spent in rebuffering, average video quality~\cite{Dobrian:2011:UIV:2018436.2018478}, temporal quality variations~\cite{yim}  etc. collectively decide the end user streaming experience. 
Thus, an optimized streaming necessarily involves achieving optimal trade-offs between these competing metrics. As an example, a higher time spent rebuffering (and hence increased delay before video begins) leads to a lesser playback interruptions~\cite{6567080}. A lower video quality increases transmission rate, and hence reduces the starvation probability. Though dynamically switching between different qualities definitely improves upon a combination of packet starvation-video quality, it also introduces temporal variations in video quality, which are known to affect streaming experience.

Other than these trade-offs, the network dynamics also need to be taken into account~\cite{Shafiq:2014:UIN:2637364.2591975}. For example, the algorithm should switch to a low quality video upon detecting a reduction in the bandwidth of a client. Similarly, it needs to re-allocate streaming resources upon an entry or exit of clients with time.
\subsection{Our Contribution}
Contrary to the vast existing literature which provides no theoretical guarantees on the QoE properties of the proposed schemes, our algorithms maximize the combined QoE of all the users in the network. As an example,~\cite{a5} devises policy to minimize interruptions for a single client. However, a networkwide deployment of such a policy at each client need not maximize the combined QoE, i.e., a client-by-client optimization need not maximize the overall QoE experienced by the combined set of all clients~\cite{ho1972team}.

Moreover, previous works address at most one particular aspect of the optimal streaming problem, in contrast we devise algorithms that optimize the overall experience by taking into account all the factors that decide final end user QoE. 

We begin by considering the simple scenario of single last-hop case for ease of exposition, and pose the problem of maximizing the combined streaming experience (QoE) of $N$ clients as a Constrained MDP (CMDP). We then consider the associated dual problem, and show that it yields a simple, decentralized solution. This allows the clients to themselves decide upon the optimal streaming decisions.

In case the clients do not know their system parameters, e.g. wireless channel's reliability or fading model, then we develop a simple online Reinforcement Learning~\cite{sutton} algorithm using which they can ``infer" the optimal streaming decisions. Since we are able to show that the optimal policy lies in the space of decentralized policies, we are able to cut down drastically (from exponential in $N$ to linear) on the policy space that needs to be explored. Consequently, the learning occurs at a client level, and converges fast. 
\section{System Description}\label{sec:sd}
\emph{Video File Description} The server has multiple copies (files) of the same video, where each file has a different video quality or bitrate. Each such video file has been broken into a sequence of small HTTP-based file segments. While the complete video file is typically several hours in duration (e.g. movie or a live broadcast), each segement's playback time is of the order of few seconds~\cite{a5}. Such an assumption is common in popular adaptive bitrate streaming techniques such as DASH~\cite{dash}.

\begin{figure}[!t]
	\centering
	\includegraphics[width=0.5\textwidth]{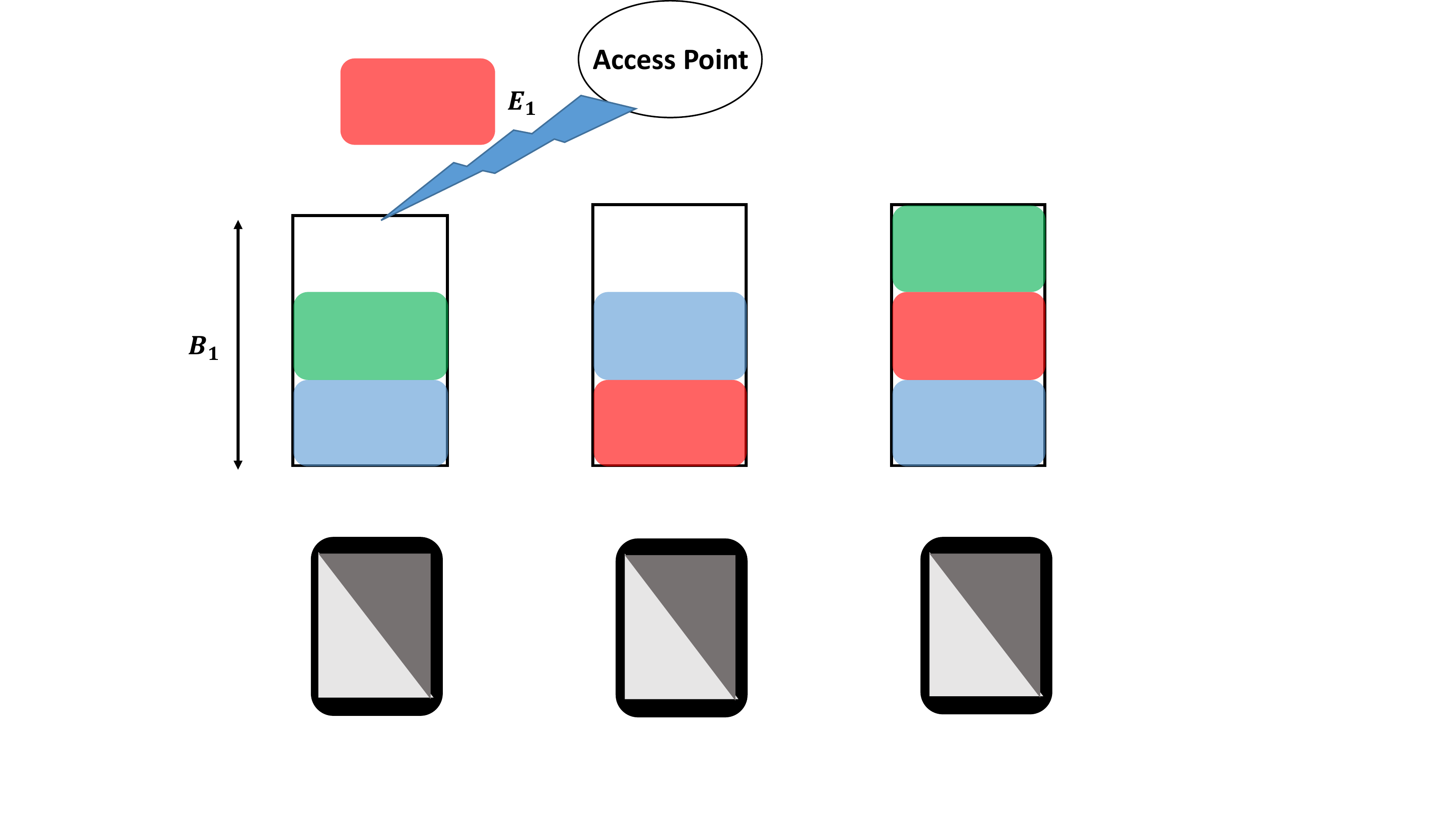}
	\caption{Clients video streaming packets from an Access Point over a shared wireless channel. $B$ denotes the buffer-size, while different colours denote packets of different video qualities.}
	\label{fig1}
\end{figure}
\emph{Network Description} 
In this section, we exclusively focus on the case of a single hop wirless network serving $N$ clients. We choose to present this simple case first In order to simplify the exposition and introduce the key concepts without resorting to complex notations. Later Sections will deal with more complex scenarios. Moreover, we ignore the interaction of our proposed algorithms with the TCP. We assume that the ``bottleneck link" is the last single hop link connecting the client to access point (AP).

A single wireless channel is shared by $N$ clients. Time is discretized, and the AP can attempt a packet transmission in a single time-slot. The buffer size of client $n$ is $B_n$ packets, and it plays a single packet for a duration of $T_n$ time-slots, see Fig.~\ref{fig1}. After finishing streaming of a packet, it starts streaming the packet at the head of its buffer. However, if it finds that the buffer is empty, then the streaming is interrupted, thus causing an ``outage". This event is also called ``starvation", i.e., a client is ``starved" of packets for sufficiently long duration, thereby causing video interruptions.
 
The clients are connected through unreliable wireless channels, so that the packet transmissions are assumed to be random. For simplicity, we will exclusively deal with the case of i.i.d. channel states\footnote{Fading channels will be covered in a later Section~\ref{fading}.}. The scheduler can pick the packet quality from $\{1,2,\ldots,Q_n\}$. A lower index of quality class is associated with a higher streaming experience. Thus, the ``cost" or disutility incurred by client $n$ for streaming a packet of quality $q$ is equal to $\lambda_{q,n}$.

The scheduler has the option to choose the power/energy level used for transmitting client $n$'s packets from the set $\{E_1,E_2,\ldots,E_M\}$. Throughout $E_1=0$ and referes to the case  of no transmission. The link reliabilities depend upon the transmission power, and the packet file size (in bits), so that a client $n$ packet of quality $q$, which is transmitted at power $E$, is delivered with a probability $P_n(q,E)$.

\emph{Streaming Experience Cost}
The streaming quality eperienced by a single client is assumed to depend upon
\begin{enumerate}
\item The average number of outages.
\item How ``often" the video gets interrupted, i.e., the number of outage-periods which is equal to the number of time-slots characterized by a  ``non-outage" to outage transition. 
\item Temporal variations, i.e., the number of times the video-quality is changed.
\item Average video quality which in turn is measured by the average disutility associated with the different video quality types.
\end{enumerate}
\section{Problem Formulation}
We begin with some notation. For simplicity, we will restrict ourselves to optimizing the factors 1)-3) listed in the streaming cost. Objective 4) will be dealt with separately in a later section.

Let $O_n(s)$ be the random variable that is $1$ if the $n$-th client faces an outage at time $s$, and $0$ otherwise, and thus is the indicator random variable of the outage event.
Let $\hat{E}_n(s)$ be the transmission power of client $n$ at time-slot $s$ and let $I_{q,n}(s)$ be the indicator random variable of the event that a packet of quality $q$ is delivered to client $n$ in time-slot $s$.
For simplicity, we will begin with a consideration of the scheduling problem under an average power constraint on AP, which models the case when the AP is operated by a battery that needs periodical charging. Streaming under constraints on the peak power, and on the number of orthogonal/independent channels which can be utilized for concurrent packet transmissions will be subject of later sections.   

The streaming problem is to the following Constrained Markov Decision Process (CMDP)~\cite{altman} for the optimal policy $\pi^\star$,
\begin{align}
&\min_{\pi} \lim_{t \to \infty}\frac{1}{t}\mathbb{E}\sum_n\sum_{s}\left( O_n(s)+\sum_{q=1}^{Q_n}\lambda_{q,n} I_{q,n}(s) \right. \label{costavg}\\
&\qquad \left.  \vphantom{\left( O_n(s)+\sum_{q=1}^{Q_n}\lambda_{q,n} I_{q,n}(s) \right)}+ \lambda_{O,n} \lvert O_n(s)\left(O_n(s-1)-1\right)\rvert \right)\notag\\
& \qquad\mbox{ subject to },\\
&\lim_{t\to\infty}\frac{1}{t}\mathbb{E}\left\{\sum_n\sum_{s}\hat{E}_n(s) \right\}\leq \bar{E}\label{pmdp}.
\tag{Primal MDP}
\end{align}
The term $|O_n(s)\left(O_n(s-1)-1\right)| $ counts the number of outage periods faced by client $n$ since it takes the value $1$ only if time-slot $s$ is the beginning of an outage-period for client $n$. The weighing parameters $\left\{ \lambda_{q,n}\right\}_{q=1}^{Q_n},\lambda_{O,n}\quad n=1,2,\ldots,N$ can be used to tune the QoS since they decide the relative importance that is placed on the different objectives. 

Since the video quality of a packet belonging to a higher class is less, we have
\begin{align*}
i>j \implies \lambda_{i,n}> \lambda_{j,n} \forall n=1,2,\ldots,N.
\end{align*}
The control action $u$ is a vector that describes the transmission power and video packet quality utilized for each user $n$. Let us denote by $C_n(l,u)$ the expected value of single step cost,
$$
O_n(s)+\sum_{q=1}^{Q_n}\lambda_{q,n} I_{q,n}(s) + \lambda_{O,n} \lvert O_n(s)\left(O_n(s-1)-1\right)\rvert, 
$$
when the buffer level of the client $n$ at time $s-1$ is $l$, and a control action $u$ is applied to it at time $s-1$. 
Thus the above problem is a CMDP in which the system state at time $t$ is described by the $N$ dimensional vector $L(t):=\left(l_1(t),l_2(t),\ldots,l_N(t)\right)$, where $l_n(t)$ is the amount of play time remaining in the buffer of client $n$ at time $t$. It can be posed as a linear program~\cite{altman} in which the variable is the steady-state measure $\mu_\pi$ induced by a policy $\pi$ on the joint state-action space $(X,U)$. Under this approach, the average cost~\eqref{costavg} is treated as the dot product between the measure $\mu_{\pi}(\cdot,\cdot)$ and the one step cost function $C(\cdot)$. The number of variables in this LP is equal to the cardinality of the joint state-action space, and hence increases exponentially with the number of clients $N$.

\emph{Linear Program for MDP}

We will however take a different approach to solving~\eqref{pmdp}. We will utilize the fact that the problem is convex (because it can be posed as an LP) and consider solving the associated dual problem. The consideration of the dual problem will show us that the original problem of scheduling $N$ clients decomposes into $N$ separate problems, where each problem now involves only a single client. In effect, the problem complexity grows linearly in the number of clients. Additionally, the optimal policy can be implemented in a decentralized fashion and computed using distributed updates~\cite{tsitsiklis1986distributed}. Moreover, an analysis of the single client MDP shows that the optimal policy has a simple threshold structure, and hence is easy to implement.   
\section{The Dual MDP}
We begin with some notation. We will use $\bar{C}_n$ to denote the time-average cost incurred by client $n$ under a stationary policy $\pi$, thereby suppressing the dependence on policy. Similarly $\bar{E}_n$ will be the average power consumption of client $n$. 

Letting $\lambda_E$ be the Lagrangian multiplier associated with the average power constraint $\sum_n \bar{E}_n\leq E$, the Lagrangian for the problem~\eqref{pmdp} is given by,
\begin{align}\label{lagrange}
\mathcal{L}(\pi,\lambda_E) = \sum_n \bar{C}_n+ \lambda_{E}\left(\bar{E}_n(s)-\bar{E}\right),
\end{align}
while the dual function is given by,
\begin{align}\label{dual}
D(\lambda_E)= \min_{\pi}\mathcal{L}(\pi,\lambda_E),
\end{align}
and the dual problem is stated as,
\begin{align}\label{dualprob}
\max_{\lambda_E\geq 0} D(\lambda_E).
\end{align}
We realize that the Lagrangian~\eqref{lagrange} decomposes into the sum of individual costs $\bar{C}_n+\lambda_E \bar{E}_n$ incurred by each client $n$, and hence in order to compute the dual function $D(\lambda_E)$, each of these individual client costs could be optimized separately by designing $\pi_n$, the policy for client $n$. This decomposition is key to the attractive properties of our proposed policies that we mentioned in the previous section. Thus,
\begin{lemma}\label{l4}
$D(\lambda_E) = \sum_n V_n(\lambda_E)-\lambda_E \bar{E}$.
\end{lemma}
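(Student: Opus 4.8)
The plan is to exploit the additive, separable structure of both the cost objective and the state dynamics. First I would rewrite the Lagrangian \eqref{lagrange} by pulling out the term that does not depend on the policy. Substituting the average power constraint $\sum_n \bar{E}_n \leq \bar{E}$, the Lagrangian regroups as
\begin{align*}
\mathcal{L}(\pi,\lambda_E) = \sum_n \left(\bar{C}_n + \lambda_E \bar{E}_n\right) - \lambda_E \bar{E},
\end{align*}
so that the penalty $-\lambda_E \bar{E}$ is a constant with respect to $\pi$ and can be removed from the minimization in \eqref{dual}.

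Next I would argue that the joint minimization over $\pi$ factorizes across clients. The CMDP state is the vector $L(t) = (l_1(t),\ldots,l_N(t))$ of per-client buffer levels, and a stationary policy $\pi$ decomposes into per-client policies $(\pi_1,\ldots,\pi_N)$. The two structural facts I would invoke are: (i) the buffer level $l_n(t)$ evolves according to dynamics driven only by the action applied to client $n$ and its own independent channel realizations, so the marginal process $\{l_n(t)\}$ depends on $\pi$ only through $\pi_n$; and (ii) the single-step cost $C_n(l,u)$ and the power $\hat{E}_n(s)$ of client $n$ are functions of client $n$'s state and action alone, there being no cross-client terms in \eqref{costavg}. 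Consequently both $\bar{C}_n$ and $\bar{E}_n$ depend on $\pi$ only through $\pi_n$, and each summand $\bar{C}_n + \lambda_E \bar{E}_n$ is a functional of $\pi_n$ alone.

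Because the objective is a sum of terms each governed by a disjoint block of decision variables, the minimum of the sum equals the sum of the minima,
\begin{align*}
\min_{\pi}\sum_n \left(\bar{C}_n + \lambda_E \bar{E}_n\right) = \sum_n \min_{\pi_n}\left(\bar{C}_n + \lambda_E \bar{E}_n\right),
\end{align*}
and defining $V_n(\lambda_E) := \min_{\pi_n}\left(\bar{C}_n + \lambda_E \bar{E}_n\right)$ as the optimal value of the single-client problem yields $D(\lambda_E) = \sum_n V_n(\lambda_E) - \lambda_E \bar{E}$, as claimed.

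I expect the main obstacle to be rigorously justifying this interchange at the level of \emph{stochastic control policies} rather than static optimization variables. One must verify that the joint policy class over which the infimum in \eqref{dual} is taken is, for the purpose of the attained minimum, equivalent to the product class of independent per-client policies: the forward inequality is immediate since any product policy is feasible, while the reverse requires that no coupling between clients survives the dualization. This hinges on the modeling assumptions that the channels are independent across clients and that, once the average power constraint is relaxed into the Lagrangian, no residual per-slot resource constraint links the clients; under these assumptions a separately optimal controller for each client assembles into a jointly optimal $\pi$, closing the argument.
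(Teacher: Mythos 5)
Your proposal is correct and takes essentially the same route as the paper: the paper's own justification is exactly the observation that the Lagrangian decomposes as $\sum_n\left(\bar{C}_n+\lambda_E \bar{E}_n\right)-\lambda_E \bar{E}$, so that the minimization splits client-by-client into the single-client problems whose optimal values define $V_n(\lambda_E)$. Your extra care in justifying the interchange of $\min$ and the sum over the joint policy class (via channel independence and the absence of any residual per-slot coupling after dualization) fills in a step the paper leaves implicit.
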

Since the policies $\pi_n$ can be combined in straightforward manner (by implementing $\pi_n$ for each client $n$) in order to obtain a policy $\pi=\otimes \pi_n$ for the overall system, we shift our focus to solving the optimal policy for a single client that minimizes its cost $\bar{C}_n+\lambda_E \bar{E}_n$.
\section{Single Client Problem} \label{scp}
In this entire section, we will omit the sub-script $n$ with the understanding that all the quantities being referred to belong to client $n$.

Client has a buffer of capacity $B$ time-slots of play-time video in which stores video packets. Note that this assumption is equivalent to assuming a buffer of size $B$ packets because a packet gets played for $T$ time-slots.
In each time-slot $t$, the AP has to choose the following two control quantities for the client:
\begin{enumerate}[i)]
\item The video quality $ q(t) \in \{1,2,\ldots,Q\}$ of the packet transmitted to the client, and, 
\item the power $E(t)\in \{\hat{E}_1,\hat{E}_2,\ldots,\hat{E}_n\}$ utilized for the transmission.
\end{enumerate}
Let $l(t)$ be the play-time duration of the packets present in the buffer at time $t$, and hence it denotes the state of the client at time $t$. The wireless channel connecting the client to AP is random, and the distribution of the outcome is described as follows. 

A packet transmission of quality $q$ utilizing a power of $E$ units at time-slot $t$ is successful with a probability $P(q,E)$.
Thus, if $l(t)\leq  B-T+1$, then $l(t+1)$ is equal to $(l(t)-1)^{+}+T$ with a probability $P(q,E)$, while it assumes the value $(l(t)-1)^{+}$ with a probability $1-P(q,E)$. But if $l(t)>B-T+1$, then because acepting a new packet in the buffer will cause it to overflow, the system state at time $t+1$ is equal to $l(t)-1$ with a probability $1$.

Let  
\begin{align}\label{sf}
&\mathcal{S}(x) := 
\begin{cases}
(x-1)^{+}+T,\mbox{ if } x \leq B-T+1,\\
x-1,\mbox{ if } B-T+1<x\leq B, 
\end{cases}\\
&\mathcal{F}(x) := (x-1)^{+},
\end{align}
be the state values resulting from a successful and failed packet transmission respectively when the system state is $x$. Let $u(t):=(q(t),E(t))$ be the control action chosen by AP at time $t$, where $q(t),E(t)$ are the video quality and transmission power level chosen at time $t$.

\emph{Single Client Unit Step Cost }: The client is charge a cost of $\lambda_{E} \times E$ for transmitting a packet at power level $E$. It faces a penalty of $1$ units if there is an outage at time $t$, and similarly a penalty of $\lambda_{O}$ units if a new outage-period begins at time $t$. Delivery of a packet of quality $q$ incurs a cost of $\lambda_{q}$.

Because the probability distribution of $l(t+1)$ is completely determined by the value of $l(t)$, the action $u(t)=\left(q(t),E(t)\right)$ chosen at time $t$, and the unit step cost can be expressed solely as a function of $l(t)$, the system state can be taken to be $l(t)$. The problem is thus a Markov Decision Process (MDP) involving only a finite number of actions and states, and the cost is optimized by a stationary Markov policy~\cite{a12}. 

We have to solve,
\begin{align}\label{scp1}
\min_{\pi} \bar{C} + \lambda_{E}\bar{E}.
\end{align}
Denote by $\pi^{\star}_n(\lambda_E)$, the optimal policy which solves the single client problem. 
We also let
\begin{align}\label{scopol}
V_n(\lambda_E) = \min_{\pi_n} \left\{\bar{C} + \lambda_{E}\bar{E}\right\},
\end{align}
be the optimal cost, and $V_n(\lambda_E,\pi)$ be the cost associated with a policy $\pi$ when the power usage is priced at $\lambda_E$.  
\section{Threshold Structure of the Optimal Policy for the Single Client Problem}
We next show that the optimal policy for the single client problem has a certain ``threshold structure". Precise definition will be given shortly. Roughly this means that the transmission power should increase, and quality of transmitted packet decrease as the buffer level of video playtime decreases. Such a property is quite appealing because the decision process is simple to describe. Moreover, it has an added computational advantages since it reduces the size of the policy search space.

In the below, we omit the subscript $n$. Our approach to showing the threshold structure of the optimal policy will be to analyze the corresponding $\beta$-discounted optimization problem, and show that the solution to it has threshold structure. The result for the undiscounted problem then follows straighforward by using results in~\cite{blackwell1}.

We begin with a discussion of the $\beta\in\left(0,1\right)$ discounted infinite horizon cost problem for the single client. Let
\begin{align}\label{disc}
& V_\beta(x) = \min_{\pi}\lim_{t\to\infty}\mathbb{E}\left[ \sum_{t=0}^{\infty}\beta^t\left(C(l(t),u(t))+\lambda_E E(t) \right)\right],
\end{align} 
be the minimum $\beta$-discounted infinite horizon cost for the system starting in state $x$ at time $0$, where $x$ can assume values in the set $\{0,1,\ldots,B\}$, and where $C(l,u)$ is the one-step cost associated with chooing a scheduling action $u$ when the buffer level is $l$.

Similarly let $V^s_\beta(x)$ denote the minimum discounted cost incurred in $s$ time-slots when the starting value of the state is $x$,
\begin{align*}
V^s_\beta(x) = \min_{\pi^s}\mathbb{E}_{x}\left[ \sum_{t=0}^{s}\beta^t\left(C(l(t),u(t))+\lambda_E E(t) \right)\right],
\end{align*} 
where $\pi^s$ is a policy for the $s$ horizon $\beta$-discounted problem. The functions $ V_\beta(x),V^s_\beta(x)$ are not to be confused with their undiscounted counterparts $V_n(\lambda_E)$ that were defined in the previous section.

Th one-step Dynamic programming backward induction can be written as,
\begin{align}\label{eq:1}
&V^{s}_\beta(x)  = \min_{(q,E)}1(x=0)+ \lambda_E E\notag \\
&+P(q,E)\left[\lambda_q+\beta V^{s-1}_\beta(\mathcal{S}(x)) \right]\notag\\
&+\left(1-P(q,E)\right)\left[1(x=1)\lambda_O+\beta V^{s-1}_\beta(\mathcal{F}(x)) \right]\notag\\
& =1(x=0) + 1(x=1)\lambda_O+ \left[\beta V^{s-1}_\beta(\mathcal{F}(x)) \right]\notag\\
&+ \min_{\boldsymbol{u}}  \{\hat{C}(\boldsymbol{u})-P(\boldsymbol{u}) D^\beta_s(x)\},
\end{align}
where
\begin{align}
D^\beta_{s}(x):&=1(x=1)\lambda_O + \beta\left\{ V_\beta^{s-1}(\mathcal{F}(x))-V_\beta^{s-1}(\mathcal{S}(x))\right\},\label{df}\\
& s=1,2,\ldots,\notag
\end{align}
\begin{align}
\mbox{ while }\hat{C}(\boldsymbol{u}):=\lambda_E E + P(q,E)\lambda_q,\label{cost}
\end{align}
is defined to be the one-step cost associated with choosing the action $\boldsymbol{u}=(q,E)$, which is further composed of two terms i) cost for using power of amount $E$, and ii) the disutility associated with delivering quality $q$ video packet.

We will assume that the packet transmission success probability $P(q,E)$ is
\begin{enumerate}
\item increasing in $q$ for a fixed value of power $E$, i.e., for a fixed value of transmission power, a reduction in video quality increases the data transmission rate, or equivalently increases the chances of successful delivery of packets.
\item increasing in $E$ for a fixed value of video quality $q$.
\end{enumerate}
\begin{definition}[Threshold Policy]
We say a policy is of threshold-type if it satisfies the following for each stage $s$:
\begin{itemize}
\item Fix any $E \in \{\hat{E}_1,\hat{E}_2,\ldots,\hat{E}_n\}$. If the policy chooses the action $\left(q,E\right)$ in state $x$, then it does not choose the actions $\{\left(\hat{q},E\right): \hat{q} < q \}$ for any state $1\leq y\leq x$. Put differently, for a fixed choice of transmission power, a threshold policy does not switch to a higher video quality if the buffer level is reduced.             
\item Fix any $q \in \{Q_1,Q_2,\ldots,Q_n\}$. If the policy chooses the action $\left(q,E\right)$ in state $x$, then it does not choose the actions $\{(q,\tilde{E}): \tilde{E} < E \}$ for any state $1\leq y\leq x$. For a fixed choice of video quality, the policy does not switch to a lower transmission power if the buffer level is decreased.            
\end{itemize}
\end{definition}
The following fact follows from the definition of a threshold policy.
\begin{lemma}\label{thresh}
Let $x,y \in \{1,2,\ldots,B\}$ be two values of buffer levels such that $x>y$. Let $\pi$ be a threshold policy, and denote by $u_x,u_y$ the actions that $\pi$ chooses for the state values $x$ and $y$. Then the transmission success probabilities satisfy $P(u_x)<P(u_y)$.
\end{lemma}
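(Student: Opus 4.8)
The plan is to reduce the claim to a coordinate-wise comparison of the two chosen actions and then invoke the monotonicity of the success probability $P(q,E)$ in each of its two arguments. Write $u_x=(q_x,E_x)$ and $u_y=(q_y,E_y)$ for the actions that the threshold policy $\pi$ selects at the buffer levels $x>y$. Since $P$ is assumed increasing in $q$ for fixed $E$ and increasing in $E$ for fixed $q$, it suffices to show that the lower buffer level weakly raises each coordinate, i.e. $q_x\le q_y$ and $E_x\le E_y$; these two facts chain as $P(u_x)=P(q_x,E_x)\le P(q_x,E_y)\le P(q_y,E_y)=P(u_y)$, which is the desired conclusion.

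First I would dispose of the two ``aligned'' cases directly from the Definition. If the chosen actions share the same power, $E_x=E_y$, then the first bullet of the threshold definition, applied to the action $(q_x,E_x)$ at state $x$ over the range $1\le y\le x$, forbids $\pi$ from using any quality $\hat q<q_x$ at state $y$; hence $q_y\ge q_x$ and $P(u_y)\ge P(u_x)$. Symmetrically, if the actions share the same quality, $q_x=q_y$, the second bullet forbids any power $\tilde E<E_x$ at state $y$, so $E_y\ge E_x$ and again $P(u_y)\ge P(u_x)$.

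The crux is the ``mixed'' case in which \emph{both} coordinates differ, $q_x\ne q_y$ and $E_x\ne E_y$, because then neither bullet is literally triggered: each bullet constrains only those lower states that reuse the \emph{same} power, respectively the same quality. Here I would argue that the two separate monotonicities must in fact hold jointly, by chaining through the intermediate buffer levels $y<\dots<x$: since the threshold conditions hold simultaneously for every pair of states, one follows the actions along decreasing buffer level and rules out, by contradiction, any ``crossing'' in which $\pi$ adopts strictly lower power \emph{and} strictly higher quality as the buffer drops. Concretely, I expect the intended content of the Definition to be full coordinate-wise monotonicity, namely that a decrease in buffer level never decreases the transmission power and never decreases the quality index, so that $q_x\le q_y$ and $E_x\le E_y$ hold unconditionally. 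Establishing this reading rigorously, or supplying the chaining/contradiction argument that recovers it, is the main obstacle, and it is the only place where the content of the lemma beyond a one-line appeal to the monotonicity of $P$ actually resides.

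Finally I would address strictness. The coordinate-wise argument delivers $P(u_x)\le P(u_y)$; to obtain the strict inequality asserted in the statement I would invoke strict monotonicity of $P$ together with the observation that at least one coordinate must genuinely change between $x$ and $y$ (otherwise the two actions coincide and the policy does not distinguish the two buffer levels), so that strictness holds whenever the policy is non-degenerate across the two states.
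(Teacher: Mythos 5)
Your two ``aligned'' cases are argued correctly, and your diagnosis of where the real difficulty sits is exactly right --- but the chaining/contradiction argument you defer to cannot be supplied, because no such argument exists under the literal definition. Each bullet of the Definition constrains the action at a lower state only relative to actions sharing one coordinate (same power in the first bullet, same quality in the second), and passing through intermediate buffer levels does not help: nothing forces the action chosen at an intermediate state $z$ with $y<z<x$ to share a coordinate with either $u_x$ or $u_y$, so the constraints never concatenate. Indeed the lemma is \emph{false} as literally stated. Take two qualities $1<2$, two powers $E'<E''$, and the policy $\pi(x)=(2,E')$, $\pi(y)=(1,E'')$ for $x>y$; every clause of the Definition is either vacuous or inapplicable (no two states reuse a power or a quality), so $\pi$ is threshold, yet with $P(1,E')=0.2$, $P(2,E')=0.8$, $P(1,E'')=0.3$, $P(2,E'')=0.9$ --- increasing in each coordinate separately --- one gets $P(u_x)=0.8>0.3=P(u_y)$. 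Your strictness worry is equally genuine: the Definition permits $u_x=u_y$, which forces equality, so even under a repaired definition the conclusion can be at best $P(u_x)\le P(u_y)$.

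There is, however, nothing in the paper for your argument to fall short of: the authors give no proof at all, prefacing the lemma only with the assertion that it ``follows from the definition of a threshold policy.'' Your analysis therefore exposes a real imprecision rather than missing an available argument. The repair consistent with the authors' evident intent --- the ``put differently'' glosses in the Definition, and the way the lemma is invoked in the proofs of Lemma~\ref{l3} and Theorem~\ref{t1} --- is the one you propose: read the Definition as full coordinate-wise monotonicity (as the buffer level decreases, the quality index weakly increases and the transmission power weakly increases), and weaken the lemma's conclusion to $P(u_x)\le P(u_y)$. Under that reading, your first paragraph \emph{is} the complete proof, and the weak inequality is all the subsequent induction actually needs: in the proof of Lemma~\ref{l3} the quantities $P(u_3)$ and $1-P(u_2)$ enter only as nonnegative coefficients multiplying differences controlled by the induction hypothesis, and Lemma~\ref{l2} is an algebraic identity whose use survives replacing the strict ordering of success probabilities by a weak one.
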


In the following, $(u,\pi)$ is the policy that chooses the action $u$ in the first slot (irrespective of the system state $l$), and thereafter implements the policy $\pi$. Let $V^{s,\pi}_\beta(x)$ be the discounted cost incurred by the system starting in state $x$ and operating for $s$ time-slots under the application of policy $\pi$. We have, 
\begin{lemma}\label{l2}
Let $u_1,u_2$ be two actions satisfying $P(u_2)>P(u_1)$. Then,
\begin{align*}
& V^{s,(u_2,\pi^{\star})}_\beta(\mathcal{F}(x))-V^{s,(u_1,\pi^{\star})}_\beta(\mathcal{S}(x)) = \\
& P(u_1) \left\{\beta V^{s-1}_\beta(\mathcal{S}(\mathcal{F}(x)))-V^{s-1}_\beta(\mathcal{S}(\mathcal{S}(x)))\right\} \\
&+(1-P(u_2)) \left\{  1(\mathcal{F}(x)=1)\lambda_O + \beta V^{s-1}_\beta(\mathcal{F}(\mathcal{F}(x)))\right. \\
& \left. -V^{s-1}_\beta(\mathcal{F}(\mathcal{S}(x)))\right\}+\hat{C}(u_2)-\hat{C}(u_1)\\
&= P(u_1) \left\{ \beta V^{s-1}_\beta(\mathcal{F}(\mathcal{S}(x)))-V^{s-1}_\beta(\mathcal{S}(\mathcal{S}(x)))\right\} \\
&+(1-P(u_2)) \left\{ 1(\mathcal{F}(x)=1)\lambda_O + \beta V^{s-1}_\beta(\mathcal{F}(\mathcal{F}(x)))\right. \\
& \left. -V^{s-1}_\beta(\mathcal{S}(\mathcal{F}(x)))\right\}+\hat{C}(u_2)-\hat{C}(u_1).
\end{align*}
 \end{lemma}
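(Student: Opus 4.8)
The plan is to prove the identity by direct substitution of the one-step dynamic-programming expansion \eqref{eq:1} into each of the two value functions on the left-hand side, followed by regrouping of terms. Since the policy $(u,\pi^{\star})$ forces the action $u$ in the first slot and follows the optimal policy thereafter, specializing \eqref{eq:1} to a fixed first action (so that no minimization remains, and the continuation is the optimal $(s-1)$-horizon value $V^{s-1}_\beta$) gives, for any state $y$,
$$V^{s,(u,\pi^{\star})}_\beta(y) = 1(y=0) + \hat{C}(u) + P(u)\beta V^{s-1}_\beta(\mathcal{S}(y)) + (1-P(u))\left[1(y=1)\lambda_O + \beta V^{s-1}_\beta(\mathcal{F}(y))\right],$$
with $\hat{C}(u)$ the running cost \eqref{cost}. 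First I would apply this once with $(y,u)=(\mathcal{F}(x),u_2)$ and once with $(y,u)=(\mathcal{S}(x),u_1)$, which produces the four two-step successor values $V^{s-1}_\beta$ evaluated at $\mathcal{S}(\mathcal{F}(x))$, $\mathcal{F}(\mathcal{F}(x))$, $\mathcal{S}(\mathcal{S}(x))$ and $\mathcal{F}(\mathcal{S}(x))$, together with the running costs $\hat{C}(u_2),\hat{C}(u_1)$ and the boundary indicator terms.

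Next I would subtract the two expansions and collect: the $\hat{C}$ terms combine into $\hat{C}(u_2)-\hat{C}(u_1)$; the success branches contribute the terms multiplying $V^{s-1}_\beta(\mathcal{S}(\mathcal{F}(x)))$ and $V^{s-1}_\beta(\mathcal{S}(\mathcal{S}(x)))$; and the failure branches contribute those multiplying $V^{s-1}_\beta(\mathcal{F}(\mathcal{F}(x)))$ and $V^{s-1}_\beta(\mathcal{F}(\mathcal{S}(x)))$, along with the single surviving outage term $1(\mathcal{F}(x)=1)\lambda_O$. A short check removes several one-step indicator contributions: because a successful transmission always leaves at least one packet, $\mathcal{S}(x)\ge 1$, so $1(\mathcal{S}(x)=0)$ and the associated $1(\mathcal{S}(x)=1)\lambda_O$ term vanish, leaving the grouped first displayed form. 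I note that the hypothesis $P(u_2)>P(u_1)$ is not itself needed for this algebraic identity; it merely fixes the orientation in which the lemma is later invoked to show monotonicity of $D^\beta_s$ (cf.\ \eqref{df}).

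For the second equality I would use the key structural property of the dynamics: the successor maps $\mathcal{S}$ and $\mathcal{F}$ of \eqref{sf} commute, $\mathcal{S}(\mathcal{F}(x))=\mathcal{F}(\mathcal{S}(x))$ --- indeed on the interior $2\le x\le B-T+1$ both sides equal $x+T-2$. Substituting this equality of arguments interchanges $V^{s-1}_\beta(\mathcal{S}(\mathcal{F}(x)))$ with $V^{s-1}_\beta(\mathcal{F}(\mathcal{S}(x)))$ inside the grouped expression, converting the first form into the second.

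The main obstacle I anticipate is the boundary bookkeeping. The commutation $\mathcal{S}\circ\mathcal{F}=\mathcal{F}\circ\mathcal{S}$ fails at the low buffer level $x=1$ (where $\mathcal{S}(\mathcal{F}(1))=T$ but $\mathcal{F}(\mathcal{S}(1))=T-1$) and near the buffer ceiling $x>B-T+1$, so these states must be handled by separate case analysis, as must the surviving or absent indicator terms $1(\mathcal{F}(x)=0)$ and $1(\mathcal{S}(x)=1)$. I expect the clean statement to hold cleanly on the interior, with the extreme buffer levels requiring direct verification; this is where the care lies, the remainder being mechanical substitution and collection of terms.
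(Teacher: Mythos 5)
The paper states Lemma~\ref{l2} without any proof, so there is no in-paper argument to compare against; your plan --- expand the one-step dynamic program \eqref{eq:1} with a fixed first action, subtract, and regroup --- is certainly the intended route. However, there is a genuine gap in your accounting of where the commutation property enters. Direct substitution and subtraction does \emph{not} produce the first displayed form of the lemma: the raw expansion attaches the coefficient $P(u_2)\beta$ to $V^{s-1}_\beta(\mathcal{S}(\mathcal{F}(x)))$ and the coefficient $-(1-P(u_1))\beta$ to $V^{s-1}_\beta(\mathcal{F}(\mathcal{S}(x)))$, whereas the lemma's first form pairs $P(u_1)$ with both success-branch terms and $(1-P(u_2))$ with both failure-branch terms. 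The discrepancy between the raw expansion and the stated form is exactly the cross term
\begin{align*}
\beta\left(P(u_2)-P(u_1)\right)\left[V^{s-1}_\beta(\mathcal{S}(\mathcal{F}(x)))-V^{s-1}_\beta(\mathcal{F}(\mathcal{S}(x)))\right],
\end{align*}
and this vanishes only because $\mathcal{S}(\mathcal{F}(x))=\mathcal{F}(\mathcal{S}(x))$ on the interior states. In other words, the commutation identity you reserve for the second equality is already indispensable for the first one; ``mechanical substitution and collection'' alone would leave you with a third expression, not the lemma's. Relatedly, your side remark that the hypothesis $P(u_2)>P(u_1)$ is not needed is correct only \emph{after} this repair: without commutation the two stated forms are genuinely different from the expansion, whatever the ordering of $P(u_1),P(u_2)$.

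The good news is that your proposal already contains the missing ingredient, so the fix is immediate: expand, isolate the cross term above, kill it by commutation, and then note that the two displayed forms are trivially equal to each other, since they differ only by swapping the labels $\mathcal{S}(\mathcal{F}(x))$ and $\mathcal{F}(\mathcal{S}(x))$. Your boundary caveats are real and worth keeping: commutation fails at $x=1$ (where $\mathcal{S}(\mathcal{F}(1))=T$ while $\mathcal{F}(\mathcal{S}(1))=T-1$), the indicator $1(\mathcal{F}(x)=0)$ survives at $x=1$ and is absent from the lemma's right-hand side, and killing $1(\mathcal{S}(x)=1)\lambda_O$ requires $\mathcal{S}(x)\ge 2$, not merely $\mathcal{S}(x)\ge 1$ as you wrote (this holds for $x\ge 2$, or for any $x$ when $T\ge 2$). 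These restrictions are consistent with how the lemma is actually used: in the proof of Lemma~\ref{l3} it is invoked at arguments $x+1\ge 2$, i.e., only on interior states.
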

 The following result is crucial to show the threshold nature of policy.
\begin{lemma}\label{l3}
The functions $D^\beta_{s}(x),x=1,2,\ldots,B-T+1$ are decreasing in $x$ for each $s=1,2,\ldots$.
\end{lemma}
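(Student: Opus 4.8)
\section*{Proof proposal for Lemma~\ref{l3}}

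The plan is to prove the statement first for the $\beta$-discounted finite-horizon value functions $V^s_\beta$ by induction on the horizon $s$, starting from $V^0_\beta\equiv 0$, and then to invoke \cite{blackwell1} to pass to the undiscounted problem, exactly as announced just before the lemma. Throughout I would write $G(D):=\min_{\boldsymbol u}\{\hat{C}(\boldsymbol u)-P(\boldsymbol u)D\}$, so that the recursion~\eqref{eq:1} reads $V^s_\beta(x)=1(x=0)+1(x=1)\lambda_O+\beta V^{s-1}_\beta(\mathcal{F}(x))+G(D^\beta_s(x))$, and I would introduce the difference $\delta_s(y):=V^s_\beta(y)-V^s_\beta(y+T)$.

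The first step is a reduction. For $x\in\{2,\dots,B-T+1\}$ one has $\mathcal{F}(x)=x-1$ and $\mathcal{S}(x)=x-1+T$, so $D^\beta_s(x)=\beta\,\delta_{s-1}(x-1)$, whereas $D^\beta_s(1)=\lambda_O+\beta\,\delta_{s-1}(0)$ and, crucially, $D^\beta_s(1)-D^\beta_s(0)=\lambda_O$ exactly (since $\mathcal{F}(1)=\mathcal{F}(0)=0$ and $\mathcal{S}(1)=\mathcal{S}(0)=T$). Thus the lemma for index $s$ is equivalent to the conjunction of (i) $\delta_{s-1}$ non-increasing on $\{1,\dots,B-T\}$ and (ii) the single boundary inequality $\beta\big(\delta_{s-1}(1)-\delta_{s-1}(0)\big)\le\lambda_O$. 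I would carry (i)--(ii) as the induction hypothesis $\mathcal{H}(s-1)$; the base case $\mathcal{H}(0)$ is immediate because $V^0_\beta\equiv 0$ forces $\delta_0\equiv 0$.

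The engine of the interior estimate is a pair of monotonicities that hold precisely because $P(\boldsymbol u)$ is a probability: $G$ is a minimum of affine maps of slope $-P(\boldsymbol u)\le 0$, hence non-increasing, and $\phi(D):=D+G(D)=\min_{\boldsymbol u}\{\hat{C}(\boldsymbol u)+(1-P(\boldsymbol u))D\}$ is a minimum of affine maps of slope $1-P(\boldsymbol u)\in[0,1]$, hence non-decreasing and $1$-Lipschitz. A short computation from the recursion gives, for every $y\ge 1$, the clean identity $\delta_s(y)=\phi(D^\beta_s(y))-G(D^\beta_s(y+T))$ (the $\lambda_O$ at state $1$ being absorbed into $D^\beta_s(1)$). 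Therefore
\begin{align*}
\delta_s(y)-\delta_s(y+1)=\big[\phi(D^\beta_s(y))-\phi(D^\beta_s(y+1))\big]+\big[G(D^\beta_s(y+1+T))-G(D^\beta_s(y+T))\big],
\end{align*}
and both brackets are nonnegative: $\mathcal{H}(s-1)$ makes $D^\beta_s$ non-increasing on $\{1,\dots,B-T+1\}$ (extended by $D^\beta_s\equiv 0$ on the overflow region, where $\mathcal{F}=\mathcal{S}$), so $D^\beta_s(y)\ge D^\beta_s(y+1)$ and $D^\beta_s(y+T)\ge D^\beta_s(y+1+T)$, while $\phi\uparrow$ and $G\downarrow$ supply the signs. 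This establishes part (i) of $\mathcal{H}(s)$.

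The main obstacle is the empty-buffer boundary, which is exactly part (ii) and where the clean identity fails. Because $\mathcal{F}(0)=0$ rather than $-1$, the state-$0$ outage and the state-$1$ penalty $\lambda_O$ break the monotone pattern, and $\delta_s(0)-\delta_s(1)$ is not \emph{a priori} signed. Expanding $\delta_s(0)$ explicitly and using the $1$-Lipschitz bound $\phi(D^\beta_s(1))-\phi(D^\beta_s(0))\le D^\beta_s(1)-D^\beta_s(0)=\lambda_O$ together with $G\downarrow$, I would reduce (ii) to a bounded-difference estimate of the form $V^{s-1}_\beta(T-1)-V^{s-1}_\beta(T)\le 1/\beta$. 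Proving this is the crux: a crude sample-path coupling of the two chains started one unit apart only yields a bound of order $\lambda_O+1$, which is useless when $\lambda_O$ is large. The resolution I expect to need the most care with, and for which Lemma~\ref{l2} appears tailored, is that the regime is self-correcting: when $\lambda_O$ is large the optimal policy drives the state away from the empty buffer, so the probability-weighted excess in the coupling stays bounded and is dominated by the $\lambda_O$ sitting additively in $D^\beta_s(1)$. Closing this case, together with the analogous but easier comparison at the full-buffer/overflow edge (where one only needs $\delta_{s-1}(B-T)\ge 0$), completes $\mathcal{H}(s)$; the induction then yields the lemma for the discounted problem, and \cite{blackwell1} finishes the undiscounted case.
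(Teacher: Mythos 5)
Your overall skeleton---induction on the horizon $s$ for the discounted problem, then passage to the undiscounted case via \cite{blackwell1}---is the same as the paper's, and your \emph{interior} argument is a genuinely different and cleaner route. Where the paper first deduces, inside the induction, that the stage-$s$ optimal policy is of threshold type and then pushes two suboptimal-action substitutions through the algebraic identity of Lemma~\ref{l2}, you exploit directly that $G(D)=\min_{\boldsymbol{u}}\{\hat{C}(\boldsymbol{u})-P(\boldsymbol{u})D\}$ is non-increasing and $\phi(D)=D+G(D)$ is non-decreasing (minima of affine maps with slopes $-P(\boldsymbol{u})\in[-1,0]$ and $1-P(\boldsymbol{u})\in[0,1]$), together with the identity $\delta_s(y)=\phi(D^\beta_s(y))-G(D^\beta_s(y+T))$, which I checked and which does hold for all $y\ge 1$ (at $y=1$ the $\lambda_O$ is correctly absorbed into $D^\beta_s(1)$). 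This dispenses with Lemma~\ref{l2} and the threshold machinery entirely. One loose end you flag but do not prove: crossing the overflow edge needs $D^\beta_s(B-T+1)\ge 0$, i.e.\ $V^{s-1}_\beta(B-T)\ge V^{s-1}_\beta(B)$; the paper is equally silent on this, so I do not count it heavily against you.

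The genuine gap is the empty-buffer boundary, your part (ii)---and it is a gap by your own admission: what you offer there (``the regime is self-correcting\dots'') is a hope, not an argument, and your stated reason for abandoning the coupling is what led you astray. This is exactly the step the paper does close. Rearranging,
\[
D^\beta_{s+1}(1)-D^\beta_{s+1}(2)=\lambda_O+\beta\left[V^s_\beta(0)-V^s_\beta(1)\right]-\beta\left[V^s_\beta(T)-V^s_\beta(T+1)\right].
\]
For the first bracket, apply at state $1$ the action $\boldsymbol{u}$ that is optimal at state $0$: under any common action both states move to $T$ on success and to $0$ on failure, so only the immediate costs differ, giving $V^s_\beta(0)-V^s_\beta(1)\ge 1-\left(1-P(\boldsymbol{u})\right)\lambda_O\ge 1-\lambda_O$. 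Note it is the state-$1$ trajectory, not the state-$0$ one, that risks paying $\lambda_O$ (a failure from state $1$ \emph{starts} an outage period; from state $0$ one is already inside it), which is why a compensating $-\lambda_O$ appears here. For the second bracket, the paper's coupling (same actions, same channel randomness; the two systems stay one unit apart until they merge near the empty buffer) is not ``crude'': the lagging trajectory pays $\lambda_O$ when it fails at state $1$, but one slot later the leading trajectory is itself at state $1$ and pays $\lambda_O$ on failure, so the two penalties nearly cancel and the paper's bound is $V^s_\beta(T)-V^s_\beta(T+1)\le 1+\lambda_O-\beta\lambda_O=1+(1-\beta)\lambda_O$, not $1+\lambda_O$. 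Granting that bound, the arithmetic closes for \emph{every} $\lambda_O$ and every $\beta\in(0,1)$:
\[
D^\beta_{s+1}(1)-D^\beta_{s+1}(2)\ \ge\ \lambda_O+\beta\left(1-\lambda_O\right)-\beta\left(1+\lambda_O-\beta\lambda_O\right)=(1-\beta)^2\lambda_O\ \ge\ 0.
\]

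So the single idea missing from your proposal is this cancellation of the two $\lambda_O$'s across the coupled trajectories, combined with the $-\lambda_O$ hidden in the state-$0$/state-$1$ comparison. To be fair, your instinct that this boundary step is delicate is sound---the paper's coupling sketch is terse, and the accounting on the branch where the leading trajectory's retransmission at state $1$ \emph{succeeds} (so its $\lambda_O$ is never paid) deserves more care than the paper gives it---but as written your induction simply does not close at part (ii), which is precisely the step on which the lemma turns.
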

\begin{proof}
Within this proof, let $\pi_s^{\star}$ be the optimal policy for the $\beta$-discounted $s$ time-slots problem, and let $(u,\pi_{s-1}^{\star})$ be the policy for $s$ time-slots which takes the action $u$ at the first time-slot, and then follows the policy $\pi_{s-1}^{\star}$. In order to prove the claim, we will use induction on $s$, the number of time-slots.

Let us assume that the statement is true for the functions $D^\beta_{z}(x)$, for all $z\leq s$. In particular the function, 
\begin{align}\label{assum1}
1(x=1)\lambda_O + \beta\left\{ V_\beta^{s-1}(\mathcal{F}(x))-V_\beta^{s-1}(\mathcal{S}(x))\right\}, 
\end{align}
is decreasing for $x\in \{1,2,\ldots,B-T+1\}$. 

First we will prove the decreasing property for $x\in \{2,3,\ldots,B-T+1\}$.
Now the assumption~\eqref{assum1} made above, and~\eqref{eq:1}, together imply that $\pi_s^{\star}$ is of threshold-type.

 Fix an $x \in \{1,2,\ldots,B-T\}$ and denote by $u_1,u_2,u_3,u_4$, the optimal actions at stage $s$ for the states $\mathcal{S}(x),\mathcal{F}(x),\mathcal{S}(x+1),\mathcal{F}(x+1)$ respectively. Note that the threshold nature of $\pi_s^{\star}$ implies that, 
\begin{align*}
&P(u_1)<P(u_2), P(u_3)<P(u_4)\mbox{ and },\\
&P(u_3)<P(u_1), P(u_4)<P(u_2).
\end{align*}
This is true because as the value of state decreases in the interval $\{1,2,\ldots,B\}$, a threshold policy switches to an action that has a higher transmission success probability. So it follows from Lemma~\ref{l2} that
\begin{align*}
&V_{\beta}^{s}(\mathcal{F}(x+1))-V_{\beta}^{s}(\mathcal{S}(x+1))\\
&\leq V_{\beta}^{s,(u_2,\pi_{s-1}^{\star})}(\mathcal{F}(x+1))-V_{\beta}^{s}(\mathcal{S}(x+1))\\
& =\hat{C}(u_2) -\hat{C}(u_3) \\
&+ P_{c}(u_3)\times \beta \left[V_\beta^{s-1}(\mathcal{F}(\mathcal{S}(x+1)))-V_\beta^{s-1}(\mathcal{S}(\mathcal{S}(x+1)))\right]\\
&+\left(1-P_{c}(u_2)\right)\times \\
& \left\{1(\mathcal{F}(x+1)=1)+ \beta V_\beta^{s-1}(\mathcal{F}(\mathcal{F}(x+1))) \right. \\
&\qquad\qquad\left. - V_\beta^{s-1}(\mathcal{S}(\mathcal{F}(x+1))) \right\}\\
&\leq \hat{C}(u_2) -\hat{C}(u_3) \\
&+ P_{c}(u_3)\times \beta \left[V_\beta^{s-1}(\mathcal{S}(\mathcal{F}(x)))-V_\beta^{s-1}(\mathcal{S}(\mathcal{S}(x)))\right]\\
&+\left(1-P_{c}(u_2)\right)\times \\
&\left[1(\mathcal{F}(x)=1)+ \beta V_\beta^{s-1}(\mathcal{F}(\mathcal{F}(x))) - V_\beta^{s-1}(\mathcal{S}(\mathcal{F}(x)))\right]\\
&\leq V_{\beta}^{s}(\mathcal{F}(x))-V_{\beta}^{s}(\mathcal{S}(x)),
\end{align*}
where the first inequality follows since a sub-optimal action in the state $\mathcal{F}(x+1)$ increases the cost-to-go for $s$ time-slots, the second inequality is a consequence of the assumption that the functions $V_\beta^{s-1}(\mathcal{F}(x))-V_\beta^{s-1}(\mathcal{S}(x))$ are decreasing in $x$, while the last inequality follows from the fact that a sub-optimal action in the state $\mathcal{S}(x)$ will increase the cost-to-go for $s$ time-slots. Thus we have proved the decreasing property of $D^\beta_{s+1}(\cdot)$ for $x\in \{2,3,\ldots,B-T+1\}$, and it remains to show that $D^\beta_{s+1}(1)>D^\beta_{s+1}(2)$.

Once again, let $u_1,u_2,u_3,u_4$ be the optimal actions at stage $s$ for the states $T,0,T+1,1$ respectively. Using the same argument as above (i.e., assuming that the actions taken in stage $s$ at states $T,T+1$ are the same, and the actions taken in the states $0,1$ are the same), it follows that
\begin{align*}
& D_{s+1}(1)-D_{s+1}(2)\geq\\
& \qquad \left(1+\lambda_O -\beta \lambda_O \right)- \left(V^s_\beta(T)-V^s_\beta(T+1)\right).
\end{align*}
However, then $V^s_\beta(T)-V^s_\beta(T+1) \leq 1+\lambda_O -\beta \lambda_O$ (for $s$ stages, apply the same actions for the system starting in state $T$, as that for a system starting in state $T+1$, and note that the two systems couple at a stage $t-1$, when the latter system hits the state $1$ at any stage $t$; the hitting stage is of course random). This gives us,
\begin{align*}
& D_{s+1}(1)-D_{s+1}(2)\geq 0,
\end{align*}
and thus we conclude that the function $D_{s+1}(x)$ is decreasing for $x\in \{1,2,\ldots,B\}$. In order to complete the proof, we notice that for $s=1$, we have,
\begin{align*}
D_1^{\beta}(x) = 1(x=1)\lambda_O,
\end{align*}
and thus the assertion of Lemma is true for $s=1$.
\end{proof}
\begin{figure}[!t]
	\centering
	\includegraphics[width=0.5\textwidth]{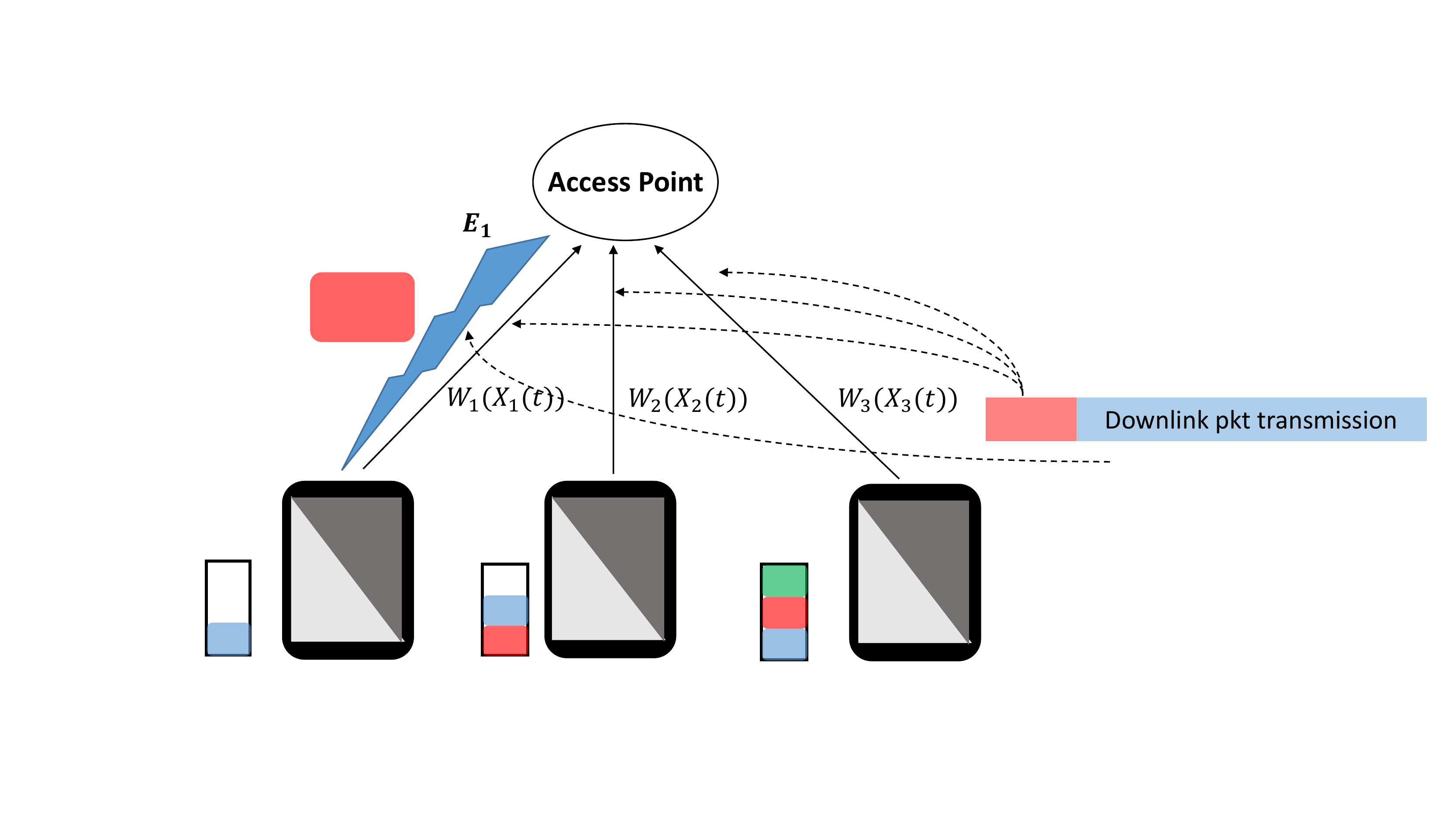}
	\caption{Time-slot is divided into two parts. The clients declare their indices $W_n(X_n(t))$ in the first part. Second part involves AP transmitting packet for the client with the largest value of the index.}
	\label{fig2}
\end{figure}
\begin{theorem}\label{t1}
For the single client scheduling problem~\eqref{scp1} there is a threshold policy that is Blackwell optimal~\cite{blackwell}, i.e., it is optimal for the discounted cost problem~\eqref{disc} for all values of discount parameter $\beta\in (\hat{\beta},1)$ for some $\hat{\beta}\in (0,1)$. Moreover this policy is also optimal for the average cost problem~\eqref{scp1}. Thus $\pi_n^{\star}(\lambda_E)$ is of threshold-type.
\end{theorem}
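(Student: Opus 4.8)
The plan is to prove the statement in four moves: (i) upgrade Lemma~\ref{l3} into threshold optimality for every finite-horizon $\beta$-discounted problem, (ii) pass $s\to\infty$ to obtain a stationary threshold policy that is $\beta$-discount optimal, (iii) use finiteness of the policy set to pin down a \emph{single} threshold policy that is discount optimal for all $\beta$ near $1$, and (iv) invoke the vanishing-discount results of~\cite{blackwell1} to transfer optimality to the average-cost criterion~\eqref{scp1}.

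First I would exploit the structure of the per-stage minimization appearing in~\eqref{eq:1}, namely $\min_{\boldsymbol u}\{\hat C(\boldsymbol u)-P(\boldsymbol u)D^\beta_s(x)\}$. For any two actions with $P(\boldsymbol u)>P(\boldsymbol u')$, the map $D\mapsto [\hat C(\boldsymbol u)-P(\boldsymbol u)D]-[\hat C(\boldsymbol u')-P(\boldsymbol u')D]$ is strictly decreasing in $D$, a single-crossing property which guarantees that the minimizing action's success probability is nondecreasing in $D^\beta_s(x)$. By Lemma~\ref{l3}, $D^\beta_s(x)$ is decreasing in $x$ on $\{1,\dots,B-T+1\}$, so as the buffer level shrinks the chosen action moves to higher $P(\boldsymbol u)$; breaking ties by picking the highest-$P$ minimizer and using that $P(q,E)$ is increasing in each coordinate separately, this monotonicity is precisely the two coordinatewise conditions in the definition of a threshold policy. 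On the remaining region $\{B-T+2,\dots,B\}$ the transition is deterministic irrespective of the action, so choosing $E=0$ is optimal, which is consistent with the threshold shape. Hence each $\pi^\star_s$ may be taken threshold.

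Next I would let $s\to\infty$. Since the state and action spaces are finite and $\beta\in(0,1)$, value iteration converges uniformly, $V^s_\beta\to V_\beta$, whence $D^\beta_s(x)\to D^\beta(x):=1(x=1)\lambda_O+\beta\{V_\beta(\mathcal F(x))-V_\beta(\mathcal S(x))\}$, and $D^\beta$ remains decreasing in $x$ as a pointwise limit of decreasing functions. Applying the same single-crossing argument to the stationary Bellman equation produces a stationary threshold policy $\pi_\beta^\star$ optimal for~\eqref{disc}. To obtain Blackwell optimality, I would use that there are only finitely many stationary deterministic policies and that each has discounted value $V_\beta(x;\pi)=[(I-\beta P_\pi)^{-1}c_\pi](x)$, a rational function of $\beta$. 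Any two such value functions either coincide identically or agree at finitely many $\beta$, so there is $\hat\beta\in(0,1)$ beyond which the ordering of all policies is frozen and one policy attains the minimum on all of $(\hat\beta,1)$. Because a threshold policy is discount optimal for every $\beta$ and threshold policies are finite in number, some fixed threshold policy $\pi_T$ is optimal for infinitely many $\beta_k\uparrow 1$; as $V_\beta(\cdot;\pi_T)$ and the optimal value are rational in $\beta$ and agree on a set accumulating at $1$, they agree throughout $(\hat\beta,1)$, so $\pi_T$ is threshold and Blackwell optimal. Finally, the vanishing-discount argument of~\cite{blackwell1} gives that a Blackwell optimal policy is average-cost optimal, since the average cost equals $\lim_{\beta\uparrow 1}(1-\beta)V_\beta$ and this limit is attained by $\pi_T$; thus $\pi_T$ solves~\eqref{scp1} and $\pi^\star_n(\lambda_E)$ is of threshold type.

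I expect the main obstacle to be move (iii): extracting one threshold policy that is simultaneously discount optimal for a whole left-neighborhood of $1$, rather than a $\beta$-dependent one. This needs the rationality-in-$\beta$ of the policy values together with care about ties, both in the per-stage minimizer (so that a threshold selection is always available) and in the global comparison (so that $\pi_T$, and not merely its value, is optimal on the full interval). A secondary technical point is reading the scalar monotonicity of $P(\boldsymbol u)$ in $D$ back into the two coordinatewise clauses of the threshold definition, which again is handled by the coordinatewise monotonicity assumptions on $P(q,E)$ and a consistent tie-breaking rule.
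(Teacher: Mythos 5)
Your proposal is correct and follows essentially the same route as the paper: it uses Lemma~\ref{l3} together with the single-crossing structure of $\hat C(\boldsymbol u)-P(\boldsymbol u)D^\beta_s(x)$ to get threshold optimality for the finite-horizon discounted problems, and then passes to Blackwell and average-cost optimality via finiteness of the state and action spaces. The only difference is that you spell out the steps the paper compresses into citations---the $s\to\infty$ limit, the rationality-in-$\beta$ argument for a single policy optimal on all of $(\hat\beta,1)$, and the vanishing-discount transfer---which is a welcome elaboration rather than a different approach (just note that the pointwise limit $D^\beta$ is only weakly decreasing, so your tie-breaking rule is indeed needed).
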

\begin{proof}
Fix a $q$ and let $E_i,E_j, i>j$ be two power levels. Without loss of generality, let $u_1=(q,E_i),u_2=(q,E_j)$. Clearly $\hat{C}(u_1)>\hat{C}(u_2)$~\eqref{cost}. In the Bellman equation~\eqref{eq:1}, consider the term depending on $\bs{u}$, i.e. the term $\hat{C}(u)-P(u) D^\beta_s(x)$. For $x,y\in \{1,2,\ldots,B-T+1\}$, $x>y$, we have,
\begin{align*}
& \hat{C}(u_1)-P(u_1) D^\beta_s(x) - \left(\hat{C}(u_2)-P(u_2) D^\beta_s(x)\right)\\
&-\{\hat{C}(u_1)-P(u_1) D^\beta_s(y) - \left(\hat{C}(u_2)-P(u_2) D^\beta_s(y)\right)\}\\
&=\left(P(u_1)  - P(u_2)\right) \left(D^\beta_s(y)-D^\beta_s(x)\right)\\
&\geq 0,
\end{align*}
where the last inequality follows from Lemma~\ref{l3}. Thus it follows that if action $u_1$ is preferred over action $u_2$ for any state $x$, then $u_1$ will also be preferred over action $u_2$ for any state $y<x$, $y\in \{1,2,\ldots,B-T+1\}$. Finally note that it follows from the Bellman equation~\eqref{eq:1} and~\eqref{sf}, that the optimal action for states $x>B-T+1$ is to let $E=0$ (since any packet that is received will be lost due to buffer over flow). The proof for variations in power levels is similar. Thus it follows from the definition of a threshold policy that the optimal policy is of threshold type.

Finally note that the statement regarding Blackwell optimality follows from the result in the above paragraph, and because the state-space is finite.
\end{proof}
\section{Solution of Primal MDP}
We now utilize the solution of the dual MDP and present the solution of the Primal Problem which involves minimizing the net operation cost~\eqref{costavg} subject to average power constraints~\eqref{pmdp}.

\begin{theorem}\label{t2}
Consider the Primal MDP~\eqref{pmdp} and its associated dual problem defined in~\eqref{dual}. There exists a price $\lambda^\star_E$ such that $\left(\pi^{\star}(\lambda^\star_E),\lambda^\star_E\right)$ is an optimal primal-dual pair and thus the policy $\pi^{\star}(\lambda^\star_E)$ solves the Primal MDP. 
\end{theorem}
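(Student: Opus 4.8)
The plan is to establish strong Lagrangian duality for the Primal MDP and then exhibit a saddle point of the Lagrangian whose policy component is the product of the single-client threshold policies. First I would use the fact, already noted above, that the Primal MDP can be written as a linear program over the occupation measure $\mu_\pi$ on the joint state–action space $(X,U)$. Since each client's state space $\{0,1,\ldots,B_n\}$ and action set are finite, the set of achievable occupation measures is a compact polytope, and both the objective $\eqref{costavg}$ and the power functional $\bar E(\mu)$ are linear, hence continuous, in $\mu$. Assuming feasibility (for instance, $\bar E$ large enough to admit some policy meeting the constraint), a primal optimum is therefore attained and the optimal value is finite.

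Next I would invoke strong LP duality. Because the primal is a feasible and bounded linear program, there is no duality gap: $\max_{\lambda_E\ge 0} D(\lambda_E)$ equals the optimal primal value and the maximum is attained at some finite $\lambda^\star_E\ge 0$. Attainment also follows directly from concavity of $D$, which as an infimum of affine functions of $\lambda_E$ is concave, together with the coercivity estimate $D(\lambda_E)\to-\infty$ as $\lambda_E\to\infty$ whenever the constraint is active.

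With $\lambda^\star_E$ in hand, I would apply Lemma~\ref{l4} to write $D(\lambda^\star_E)=\sum_n V_n(\lambda^\star_E)-\lambda^\star_E\bar E$, and Theorem~\ref{t1} to obtain, for each client $n$, a threshold policy $\pi^\star_n(\lambda^\star_E)$ attaining $V_n(\lambda^\star_E)$. Setting $\pi^\star(\lambda^\star_E)=\otimes_n \pi^\star_n(\lambda^\star_E)$ yields a policy that minimizes $\mathcal{L}(\cdot,\lambda^\star_E)$. To finish I would verify the saddle-point conditions: $\pi^\star(\lambda^\star_E)$ minimizes $\mathcal{L}(\cdot,\lambda^\star_E)$ by construction, while $\lambda^\star_E$ maximizes $\mathcal{L}(\pi^\star(\lambda^\star_E),\cdot)$ over $\lambda_E\ge 0$ provided complementary slackness $\lambda^\star_E\bigl(\bar E(\pi^\star(\lambda^\star_E))-\bar E\bigr)=0$ and primal feasibility hold. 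Strong duality then forces $(\pi^\star(\lambda^\star_E),\lambda^\star_E)$ to be an optimal primal–dual pair, so $\pi^\star(\lambda^\star_E)$ solves the Primal MDP.

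The main obstacle is exactly this complementary-slackness and feasibility step. As the price $\lambda_E$ increases, the power consumed by the Lagrangian-minimizing threshold policy is nonincreasing but may jump at the finitely many prices where the optimal action switches, so there need not be a purely deterministic threshold policy whose average power equals $\bar E$ exactly. The standard remedy, following the CMDP theory of~\cite{altman}, is to let $\pi^\star(\lambda^\star_E)$ randomize at a single state between the two threshold policies bracketing $\lambda^\star_E$, choosing the randomization probability so as to hit the constraint; this restores complementary slackness while keeping the policy within the decentralized threshold class. I would therefore present the feasibility argument through the occupation-measure LP, where an optimal basic solution corresponds precisely to such a one-state randomization, rather than claiming that a deterministic minimizer always suffices.
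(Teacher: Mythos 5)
Your proposal is correct and follows essentially the same route as the paper: pose the Primal MDP as a linear program over occupation measures, establish strong duality, and then combine the decomposition of Lemma~\ref{l4} with the single-client threshold policies of Theorem~\ref{t1}. There are two differences worth noting. First, where you invoke generic LP strong duality (feasibility plus boundedness, or concavity of $D$ plus coercivity), the paper verifies Slater's condition directly: the policy that never transmits consumes zero power, so the constraint is strictly feasible whenever $\bar{E}>0$. Both arguments rest on the same zero-power policy and are interchangeable here. Second, and more substantively, your final paragraph supplies a step the paper omits. The paper's proof ends with ``the claim of the Theorem then follows from Lemma~\ref{l3}'' (apparently a slip, since the relevant statement is the decomposition in Lemma~\ref{l4}), which tacitly assumes that the deterministic product of threshold policies minimizing the Lagrangian at $\lambda^\star_E$ is itself primal feasible and satisfies complementary slackness. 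As you correctly observe, this can fail: the average power of the Lagrangian-minimizing policy is a nonincreasing, piecewise-constant function of the price that may jump across $\bar{E}$ exactly at $\lambda^\star_E$, in which case no deterministic threshold policy meets the constraint with equality, and the standard CMDP remedy~\cite{altman} of randomizing at a single state between the two bracketing threshold policies is needed to produce a genuine saddle point. So your proof is not merely a match for the paper's; it closes the complementary-slackness gap in the paper's one-line conclusion, at the modest cost of enlarging the policy class to admit one randomization (and, strictly speaking, this also indicates that the theorem's assertion about the deterministic policy $\pi^{\star}(\lambda^\star_E)$ should be read as allowing that randomized modification).
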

\begin{proof}
We observe that there is a one-to-one correspondence between any stationary randomized policy, and the measure it induces on the state-action space, and thus the Primal MDP can be posed as a linear program~\cite{ergodic1,ergodic2}. Thus it follows from Slater's condition~\cite{Ber87} that for the Primal MDP, strong duality holds if there exists a policy $\pi$ that satisfies the constraints $\sum_n\bar{E}_n < \bar{E}$. However the policy which never schedules any packets incurs a net power expenditure of $0$, and thus Slater's condition is true for the Primal MDP if $\bar{E}>0$. The claim of the Theorem then follows from Lemma~\ref{l3}.
\end{proof}
\emph{$\pi^{\star}(\lambda^\star_E)$ is Decentralized} : Since the optimal decision $u_n(t)=\left(q_n(t),E_n(t)\right)$ for client $n$ at time $t$ can be obtained by solving the single client  MDP~\eqref{scp1}, the decision $u_n(t)$ can be taken by client $n$ itself. The system behaves as if there are $N$ clients operating in parallel without any interaction amongst themselves. They are coupled only through the price $\lambda_E$. This eliminates the need for a centralized controller at the AP.

We still need to address a couple of issues in order to complete the discussion on obtaining optimal policy. In order to implement the optimal policy $\pi^{\star}(\lambda^\star_E)$, the clients need to know $\lambda^\star_E$, i.e., the optimal price of power. $\lambda^\star(E)$ can be solves for if the entire system parameters are known at the AP. However, we seek a decentralized solution, in which the clients do not share their parameters with the AP. This can be attained by iterating on the price $\lambda(E)$ via the sub-gradient descent method~\cite{shor} as explained below.
\subsection{Obtaining $\lambda^{\star}_E$ iteratively in a decentralized fashion}
$\lambda^\star_E$ can be obtained as the solution of the dual problem~\eqref{dualprob}. Since the dual problem~\eqref{dualprob} involves maximizing the concave function $D(\lambda_E)$, we can use the following sub-gradient descent~\cite{shor} iterations in order to converge to the optimal value,
\begin{align}\label{pud}
\lambda^{k+1}_E = \left\{\lambda^k_E - \alpha_k \frac{\partial D}{\partial \lambda_E} \right\}^{+},~k=1,2,\ldots,
\end{align}
where the sub-gradient $\frac{\partial D}{\partial \lambda_E} $ is given by,
\begin{align}\label{partial1}
\frac{\partial D}{\partial \lambda_E} =\bar{E}- \sum_n \bar{E}_n(\pi^\star_n(\lambda^k_E)),
\end{align} 
where $\bar{E}_n(\pi^\star_n(\lambda^k_E))$ is the average power consumption of client $n$ when the AP sets the price at the value $\lambda^k_E$, and each client $n$ employs the corresponding optimal policy $\pi^\star_n(\lambda^k_E)$. The term $\sum_n \bar{E}_n(\pi^\star_n(\lambda^k_E))$ represents in a certain sense the total ``congestion" faced by the AP when it sets the price at $\lambda^k_E$. 

The iterations~\eqref{partial1} yield a decentralized solution. At each iteration $k$, AP declares the price $\lambda^k_E$. The clients then solve for their optimal policies $\pi^\star_n(\lambda^k_E)$, which also yields the average power consumptions $\bar{E}_n(\pi^\star_n(\lambda^k_E))$. They then declare these values $\bar{E}_n(\pi^\star_n(\lambda^k_E))$ to the AP. The AP then updates the price to $\lambda^{k+1}_E$ using~\eqref{pud}. The scheme is summarized in Figure~\ref{fig3}.
\begin{figure}[!t]
	\centering
	\includegraphics[width=0.5\textwidth]{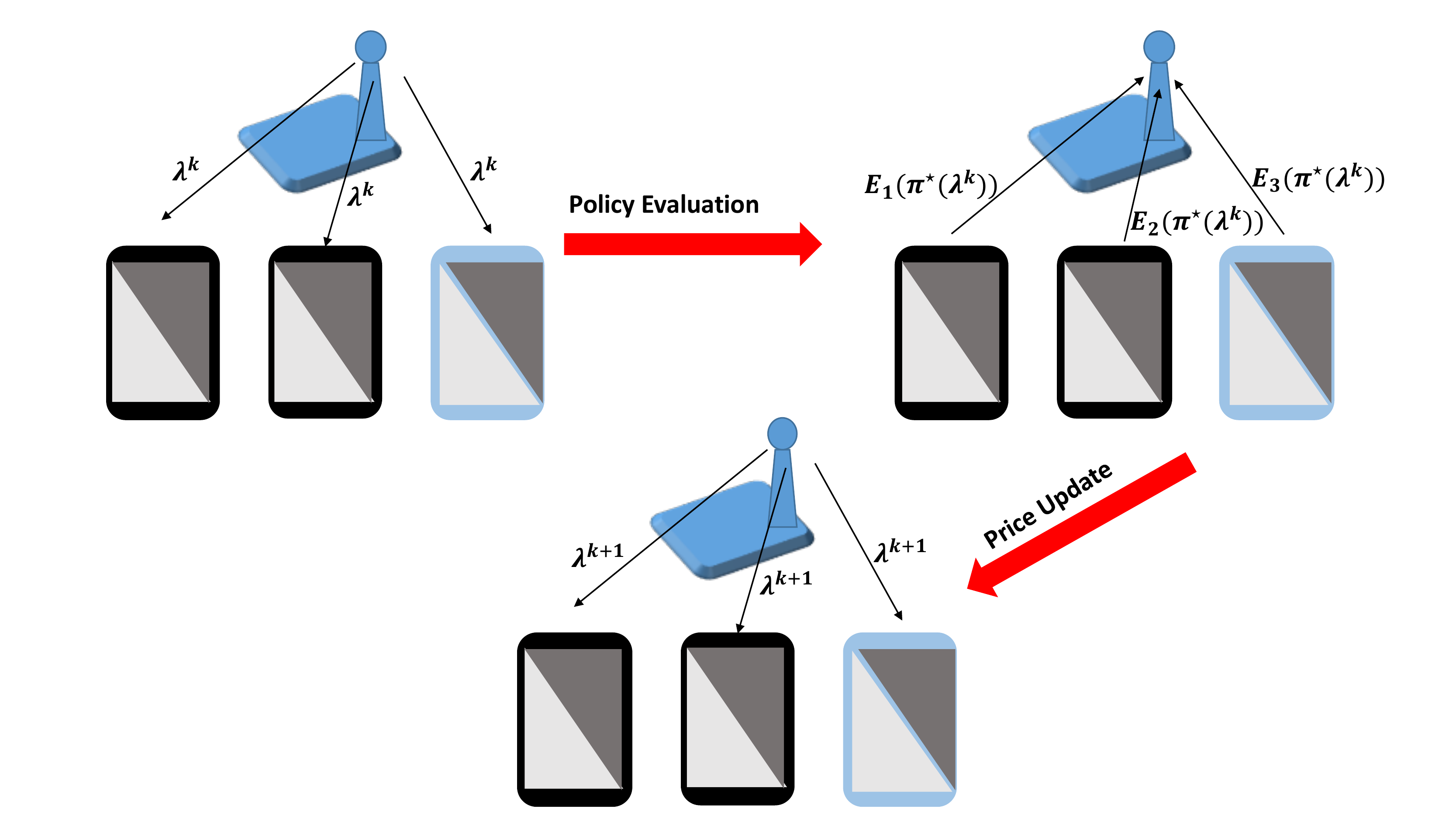}
	\caption{Decentralized iterations involving policy evaluations (at clients) followed by price updates (at AP) which converge to optimal price $\lambda^\star$ and optimal policy $\otimes \pi^\star(\lambda^\star_E)$.}
	\label{fig3}
\end{figure}
\subsection{Online Learning}
In case either the clients do not know their (possibly time varying ) system parameters such as channel reliabilities $P(q,E)$, buffer sizes etc., or they do not want to solve for the optimal policy $\pi^\star(\lambda^k_E)$, we would like to bypass the policy calculation step of iterations~\eqref{pud},~\eqref{partial1} (see Fig.~\ref{fig3}). Our problem thus lies in the realms of Reinforcement Learning~\cite{sutton}, and we can use online learning schemes such as Q-learning etc.~\cite{sutton}. These schemes adaptively ``learn" the optimal policy by simultaneoulsy ``exploring" the policy space, and ``exploiting" the past experience in the form of collected rewards (video streaming experienced so far). It involves them to keep track of the Q-factors $Q(l,u)$, which represent the ``positive effect" of choosing the action $u$ when the buffer level is $l$.  The Q factors should not be confused with the video quality. The $Q$ values are updated as,
\begin{align*}
 &Q^{t}(l(t),u(t)) (1-\beta_t) + \beta_t \left\{C(t)+\min_{u} Q^t(l(t+1),u)\right\}\\
 &\leftarrow Q^{t+1}(l(t),u(t)),
\end{align*}
where $\beta_t$ is the learning rate, and the action chosen at time $t$ is the one which greedily minimizes the cost $Q^t(l(t),u)$ of taking action $u$ when system state is $l(t)$, i.e.,
\begin{align*}
u(t) \in \arg\min_{u} Q(l(t),u). 
\end{align*}
The Q-learning iterations stated above can be combined with the ``price learning iterations"~\eqref{pud} by performing the latter on a slower time-scale, i.e., letting $\alpha_t=o(\beta_t)$. The resulting algoeithm, which simultaneouly learns the optimal price and optimal policies is guaranteed to converge to the optimal values~\cite{borkarscale,borkarbook}.
\section{Streaming under Hard Constraints}\label{sec:pp}
We now consider an important extension of the video streaming problem which involves making decisions under  ``hard constraints" imposed on the scheduling actions chosen at each time $t$ such as i) the number of orthogonal channels that can be utilized for transmission at each time $t$, and ii) the peak power that can be utilized for packet transmissions. We will discuss derive algorithms only for i), since the algorithms for ii) can be derived analogously.   

Depending upon the complexity of the underlying decision process, we classify the problem instances that involve hard constraints into two distinct categories.
\begin{enumerate}
\item  Pritoritizing Clients: In this case, if a client is chosen for scheduling, then there is only a single option for power-level and video quality. Thus, the problem of the scheduler is to choose $M$ clients in each time slot for packet transmission, or in other words, prioritize the clients for packet transmissions. Here $u_n(t)=1$ or $0$ accordingly whether client $n$ was chosen or not. 
\item Quality-Power Adaptation: The scheduler also has to make dynamic quality-power adaptations on top of prioritizing the clients for packet transmissions as in i). This is the more general set-up in which the action for client $n$ is given as $u_n(t)=(q_n(t),E_n(t)$. 
\end{enumerate}
In either of the above set-up, we will propose easily implementable and simple to compute index based policies.
\section{Prioritizing Clients}\label{sec:pc}
The scheduling has to be performed under the constraint on the number of orthogonal channels available for transmission. The set-up is thus equivalent to the restless multiarmed bandit problem (RMABP)~\cite{whittle,mahajan_bandit}. It is well known that the Whittle's Index policy~\cite{whittle,Whittle2011Book} is asymptotically optimal for the RMABP in the limit the number of client $N$ and the number of orthogonal channels $M$ are scaled to $\infty$, while keeping their ratio $M/N$ a constant~\cite{weiss}. 

We now briefly describe the notion of indexability and introduce Whittle's Index Policy. Consider the following single-client MDP parameterized by the ``transmission power price" $\lambda$,
\begin{align}\label{scpr}
\min_\pi \bar{C}_n + \lambda \bar{U}_n.
\end{align}
In this modified problem, a client is charged a price of $\lambda$ units if it utilizes one unit of power, and the system evolution of the client proceeds exactly as described in earlier sections. The transmission price roughly corresponds to the minimum price that should be charged in order that the net utilization of the bandwidth matches the available bandwidth.
Let $S_n(\lambda)$ denote the set of state values of the single client, for which the optimal action is to choose the action $U(t)=0$, i.e. not transmit. If the set $S_n(\lambda)$ is non-decreasing in the price $\lambda$, i.e., 
\begin{align*}
\lambda_1\leq \lambda_2 \implies S_n(\lambda_1) \subseteq S_n(\lambda_2),
\end{align*}
then the single client relaxed problem~\eqref{scpr} is indexable. The original problem with hard constraint~\eqref{hardmdp} is indexable if each of the $N$ individual single client problems are indexable. If the problem is indexable, then define,
\begin{align}
W_n(l) = \inf_{\lambda} \{l\in S_n(\lambda) \}.
\end{align}
At each time $t$, Whittle's Index policy activates $M$ clients having the largest indices $W_n(l(t))$. It follows from Theorem~\ref{t1}, that the scheduling problem of choosing $M$ clients for packet transmissions is indexable.
\begin{theorem}
The problem of optimizing video streaming experience under the availability of $M$ orthogonal channels is indexable.
\end{theorem}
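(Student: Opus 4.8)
The plan is to use the definition of indexability stated above: the $M$-channel problem is indexable provided each single-client relaxed problem~\eqref{scpr} is indexable, so it suffices to show, for every client $n$, that the passive set $S_n(\lambda)$ is non-decreasing in the price $\lambda$. First I would specialize Theorem~\ref{t1} to the ``Prioritizing Clients'' regime, and then establish that the threshold induced by Theorem~\ref{t1} moves monotonically with $\lambda$.

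In this regime the action is binary, $u\in\{0,1\}$: either the client is scheduled at its single fixed quality-power pair, or it idles. Theorem~\ref{t1} then collapses to a single buffer threshold, namely the optimal policy transmits whenever the playtime level is at or below some $\ell^\star(\lambda)$ and idles above it. Hence the passive set is the upper set $S_n(\lambda)=\{x:x>\ell^\star(\lambda)\}$, together with the overflow states $x>B-T+1$ at which the proof of Theorem~\ref{t1} already forces the idle action. Consequently $S_n(\lambda)$ is completely described by $\ell^\star(\lambda)$, and the required condition $\lambda_1\le\lambda_2\implies S_n(\lambda_1)\subseteq S_n(\lambda_2)$ is equivalent to $\ell^\star(\lambda)$ being non-increasing in $\lambda$.

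To obtain this monotonicity I would exploit the fact that raising $\lambda$ penalizes only the active action: in the Bellman recursion~\eqref{eq:1} the price enters the scheduled action solely through the term $\lambda E$ in $\hat{C}(\bs{u})$~\eqref{cost}, whereas the idle action ($E=0$) is unaffected and contributes $\hat{C}(\bs{u})-P(\bs{u})D^\beta_s(x)=0$. The scheduled action is therefore preferred in state $x$ exactly when $\hat{C}(\text{active};\lambda)<P(\text{active})\,D^\beta_s(x)$, and since the left-hand side increases linearly in $\lambda$ while the right-hand side involves only the continuation values, increasing the price can only make idling more attractive. Paralleling the backward induction of Lemma~\ref{l3}, I would show by induction on the horizon $s$ that this activation advantage is non-increasing in $\lambda$ for every state; it then follows that once idling is optimal at a state for price $\lambda_1$ it stays optimal for every $\lambda_2>\lambda_1$, giving $S_n(\lambda_1)\subseteq S_n(\lambda_2)$ for the discounted problem. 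Letting $\beta\uparrow 1$ transfers the conclusion to the average-cost problem~\eqref{scpr} through the same Blackwell-optimality argument used in Theorem~\ref{t1}.

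The main obstacle is that the continuation value $V^s_\beta(\,\cdot\,;\lambda)$ itself depends on $\lambda$, so the ``only the active action is penalized'' heuristic must be made rigorous simultaneously with tracking how $D^\beta_s(x)$ shifts as the price rises. The inductive step needs two facts: that $V^s_\beta(x;\lambda)$ is non-decreasing and concave in $\lambda$ (it is a pointwise minimum over policies of costs that are affine and increasing in $\lambda$, with slope equal to the discounted activation frequency), and that the cross-difference $D^\beta_s(x)$ defined in~\eqref{df} is supermodular in the pair $(x,\lambda)$, so that the zero-crossing of the activation advantage, i.e. the threshold $\ell^\star(\lambda)$, is monotone in the price. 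Establishing this joint monotonicity and supermodularity in $(x,\lambda)$ is the technical crux; once it is available, the threshold comparison and the passage to the average-cost limit are routine.
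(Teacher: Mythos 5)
Your overall strategy follows the same skeleton as the paper's: reduce indexability of the $M$-channel problem to indexability of each single-client relaxed problem~\eqref{scpr}, then invoke the threshold structure of Theorem~\ref{t1}. The paper's own proof consists of exactly that and nothing more --- a single sentence asserting that indexability ``follows from Theorem~\ref{t1}.'' You correctly recognize that this is not sufficient: Theorem~\ref{t1} gives a threshold in the buffer level $x$ \emph{for each fixed price} $\lambda$, whereas indexability is a monotonicity statement \emph{in} $\lambda$ --- the passive set $S_n(\lambda)$ must be non-decreasing as $\lambda$ increases. These are logically distinct properties: threshold structure for every fixed $\lambda$ does not by itself force the threshold $\ell^\star(\lambda)$ to move monotonically with the price, and it is precisely this latter property that can fail in general restless bandits. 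Locating the content of the theorem there, rather than in the threshold structure itself, is a genuine improvement over the paper's argument.

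That said, as a proof your proposal is incomplete, and you say so yourself: the monotonicity of $\ell^\star(\lambda)$ --- equivalently the supermodularity in $(x,\lambda)$ of the activation advantage --- is exactly the step you defer (``I would show by induction\dots'', ``the technical crux''). The difficulty you name is real: the continuation values in~\eqref{eq:1} depend on $\lambda$, so the observation that only the active action carries the $\lambda E$ charge does not alone yield $S_n(\lambda_1)\subseteq S_n(\lambda_2)$. A route that closes the gap more cheaply than a horizon induction is the following. The optimal cost of~\eqref{scpr}, viewed as a function of $\lambda$, is a pointwise minimum of functions affine in $\lambda$ and hence concave; wherever differentiable, its derivative equals the long-run activation frequency under the optimal policy, and concavity forces this frequency to be non-increasing in $\lambda$. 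Since by Theorem~\ref{t1} an optimal policy may be taken to be of threshold type, and since among threshold policies the stationary activation frequency is strictly increasing in the threshold (the buffer chain visits every level between two thresholds with positive probability), the threshold itself must be non-increasing in $\lambda$, which is the required nesting of the passive sets. Until some such argument is written out, both your proposal and the paper leave the theorem unproven; yours at least identifies precisely what is missing.
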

We next provide an algorithm for computing the Whittle's Indices.

\emph{Calculating Whittle's Indices}
If the channel reliabilities $p_n$ are assumed to be known to the clients, then each client $n$ can solve the following system of linear equations for $A,V(x),\lambda$,
\begin{align*}
A + V(x) &= C(x,0) + V(\mathcal{F}(x)),~ x>k\\
A + V(x) &= C(x,1) + \lambda + p\mathcal{S}(x)+(1-p)V(\mathcal{F}(x)),~ x<k\\
V(0)&=0,\\
A+ V(k) &= C(k,0) +  \mathcal{F}(k), \\
A+V(k) &= C(k,1) + \lambda + pV(\mathcal{S}(k))+(1-p)V(\mathcal{F}(k)),\\
\end{align*}
where $A$ is the average cost, $V(x)$ is the relative cost for system starting in state value $x$, while the last two equations state that the active and passive actions are both optimal when the price is set to $\lambda$.
\section{Quality-Power Adaptation }\label{sec:qpa}
The extension of the MABP setting to the case when more than two actions are available, is called bandit superproceses~\cite{Whittle2011Book,mahajan_bandit}. For superprocesses, the notion of Whittle's index policy does not exist~\cite{mahajan_bandit}. Hence we will define appropriate notion of indices, and provide the corresponding index policies. We will assume that the value function $V(\cdot)$ corresponding to the scheduling problem is separable, i.e., there exist functions $g_n(\cdot), n=1,2,\ldots,N$ such that,
\begin{align*}
V(l_1,l_2,\ldots,l_N) = \sum_{n}g_n(l_n), 
\end{align*}
for all possible values of the state vector $(l_1,l_2,\ldots,l_N)$. Under the separable value function assumption, the optimal policy chooses $M$ clients in the decreasing order of the indices, 
\begin{align*}
\max_{u}C_n(l_n,u) + P_{n}(u)V_n(\mathcal{S}(l_n)) + (1-P_n(u))V_n(\mathcal{F}(l_n)),
\end{align*}
providing us a convenient index based policy. 

Since we would like to use online data from system operation in order to compute the above indices, we propose a Q-learning type algorithm. 
\begin{algorithm}
\caption{Q-learning based Index Policy}
At each time $t$, for each client $n$ maintain Q-values $Q^t_n(l,u)$.
\begin{enumerate}
\item \emph{Scheduling} : Implement the action $(n,u)$ with a probability \begin{align*}
\frac{\exp(\tau Q^t_n(l,u))}{\sum_{m,l,u}\exp(\tau Q^t_n(l,u))}.\end{align*}
\item \emph{Q-Update}: Let client $n$ be served in time $t$. Then the update occurs as,
\begin{align*}
 Q^{t+1}_{n}(l_{n}(t),u_{n}(t)) &= Q^{t}_{n}(l_{n}(t),u_{n}(t)) \left(1-\alpha_t\right) \\
&+ \alpha_t \left\{ C_{n}(t) + \max_{u} Q^{t}_{n}(l_{n}(t+1),u)\right\}
\end{align*}
\end{enumerate}
\end{algorithm}
\section{Extensions}\label{tqv}
Several useful extensions can be considered. We briefly describe the set-up and mention the approach to designing algorithm. 

\emph{Temporal Variations} in video quality are also an important factor that affect the user engagement~\cite{yim}. Infact~\cite{yim} shows that it might play an even more role than the video quality. In order to optimize over temporal variations, we simply need to augment client $n$'s state $l_n(t)$ by including $q_{n}^{-}(t)$, the video quality of the latest packet delivered to client $n$ uptil time $t$. Index policies can then be derived using the augmented state variable.

\emph{Multi-Hop Networks }: Video packets have to traverse a path comprising of multiple nodes before they reach the end user. The network is described by graph $G=(V,E)$, where $V$ is the set of nodes, and $E$ is the set of directed links that can be used for packet transmissions. The link capacities $C_\ell$ are stochastic processes. Each node $i\in V$ maintains a packet buffer for each client whose packets are routed through node $i$. The state of client $n$, i.e., $l_n(t)$ is now described by the $E$ dimensional vector $\{l_{n,i}(t)\}_{i\in V}$, where $l_{n,i}(t)$ is the buffer occupancy of client $n$ at node $i$ at time $t$. In case a client does not use a node $i$ for routing packets, one can simply set the corresponding buffer to be zero. 

Within this set-up, two important class of policies can be considered. 
\begin{enumerate}
\item Client-Level Policies: Network operator charges price at the rate $\lambda_\ell$ for using bandwidth at link $\ell$. The vector $\lambda\geq0$ comprising the prices $\lambda_\ell$ are known to the clients. Client $n$ then employs the policy $\pi^\star_n(\lambda)$ that optimizes its individual operating cost $\bar{C}_n+ \sum_{\ell}\lambda_\ell \bar{E}_{n,\ell}$, where $\bar{E}_{n,\ell}$ represents the average bandwidth utilized by it on link $\ell$. Note that $\pi_n$ maps the vector $l_n(t)$ the state of flow $f$, so that the policy needs to know the network wide state of client $n$. An index policy similar to that provided in Sections~\ref{sec:pc} and~\ref{sec:qpa}.

\item Node-Level Policies: Client-Level Policies require obtaining the ``global" state of client $n$, which might be too much of a communication overhead. We can restrict ourselves to the class of decentralized policies, so that the scheduling decisions at a node $i$ depend only on $l_{n,i}(t)$, i.e., the $i$-th component of client state. The policies can be modified in a straightforward fashion to yield the optimal policy within this class.
\end{enumerate}
We also mention that various function approximation techniques can be utilized~\cite{konda1999actor,bertsekas1995neuro} in order to combat the resulting state space explosion.

We would also like to mention that we can consider the extension where the clients join and leave the system dynamically.
\section{Fading Channels}\label{fading}
The results in the previous sections can be extended in a straight forward manner to the case of fading channels. Let the channel conditions for client $n$ be described by a Markov process evolving on finitely many states $\{1,2,\ldots,C_n\}$ having a transition matrix $\Pi_n$. The state of client $n$ is described by the vector $\boldsymbol{x}_n(t):=\left(l_n(t),c_n(t)\right)$, where $l_n(t)$ is the play-time duration of the packets present in the buffer at time $t$, and $c_n(t)$ is the channel condition at time $t$. 
If the client $n$ is scheduled a packet transmission of quality $q$ at an power $E$ at time $t$, then the system state at time $t+1$ is $\left(\mathcal{S}(l(t)),\tilde{c}\right)$ with a probability $P_{n,c_n(t)}(q,E)\Pi(c_n(t),\tilde{c})$, while it is $\left(\mathcal{F}(l(t)),\tilde{c}\right)$ with a probability $P_{n,c_n(t)}(q,E) \Pi(c_n(t),\tilde{c})$.

However now the cost associated to an action $\bs{u}$ also depends on the channel condition, i.e.,
\begin{align}\label{cost1}
C_c(\boldsymbol{u}):=\lambda_E E + P_{c}(l,E)\lambda_q ,
\end{align}
and a threshold policy will have a threshold structure for each value of channel condition (as defined in Section~\ref{scp}). 

\bibliographystyle{IEEEtran}
\bibliography{../../GTS/combinedbib}
\end{document}